\title{Computing Weighted Subset Transversals in $H$-Free Graphs\thanks{The research in this paper received support from the Leverhulme Trust (RPG-2016-258).
The first author was also supported by a Rutherford Foundation Postdoctoral Fellowship, administered by the Royal Society Te Ap$\bar{\mbox{a}}$rangi.
  An extended abstract of this paper will appear in the proceedings of WADS 2021~\cite{BJP21}.}}
\author{Nick Brettell\inst{1}\orcidID{0000-0002-1136-418X} 
\and
Matthew Johnson\inst{2}\orcidID{0000-0002-7295-2663}
\and \\
Dani\"el Paulusma\inst{2}\orcidID{0000-0001-5945-9287}}
\authorrunning{N. Brettell, M. Johnson, D. Paulusma}
\institute{School of Mathematics and Statistics, Victoria University of Wellington, \\ New Zealand, \email{nick.brettell@vuw.ac.nz} \and
Department of Computer Science, Durham University, UK, \\ \email{\{matthew.johnson2,daniel.paulusma\}@durham.ac.uk}}
\newcounter{ctrclaim}[theorem]
\newcounter{ctrcase}[theorem]
\newcounter{ctrstep}[theorem]
\newcounter{ctrsubstep}[ctrstep]
\newcounter{ctrsubsubstep}[ctrsubstep]
\newcommand\displaycase[1]{{\bfseries#1}}
\newcommand{\thmstep}[1]{\medskip\phantomsection\refstepcounter{ctrstep}\noindent\displaycase{Step \thectrstep. }{\itshape #1}\\}
\newcommand\faketheorem[1]{{\bfseries#1}}
\newcommand{\optproblemdef}[3]{
	\begin{center}
		\begin{boxedminipage}{.99\textwidth}
			\textsc{{#1}}\\[2pt]
			\begin{tabular}{ r p{0.8\textwidth}}
				\textit{~~~~Instance:} & {#2}\\
				\textit{Task:} & {#3}
			\end{tabular}
		\end{boxedminipage}
	\end{center}
}
\newcommand{\problemdef}[3]{
	\begin{center}
		\begin{boxedminipage}{.99\textwidth}
			\textsc{{#1}}\\[2pt]
			\begin{tabular}{ r p{0.8\textwidth}}
				\textit{~~~~Instance:} & {#2}\\
				\textit{Question:} & {#3}
			\end{tabular}
		\end{boxedminipage}
	\end{center}
}
\newcommand \dia{\hfill{$\diamond$}}
\newcommand{\NP}{{\sf NP}}
\newcommand{\ssi}{\subseteq_i}
\newcommand{\si}{\supseteq_i}
\begin{document}

\maketitle

\begin{abstract} 
For the {\sc Odd Cycle Transversal} problem, the task is to find a small set $S$ of vertices in a graph that intersects every cycle of odd length.
The {\sc Subset Odd Cycle Transversal} problem requires $S$ to intersect only those odd cycles that include a vertex of a distinguished vertex subset $T$.
If we are given weights for the vertices, we  ask instead that~$S$ has small weight: this is the problem {\sc Weighted Subset Odd Cycle Transversal}.
We prove an almost-complete complexity dichotomy for {\sc Weighted Subset Odd Cycle Transversal} for graphs that do not contain a graph~$H$ as an induced subgraph.  
Our general approach can also be used for
{\sc Weighted Subset Feedback Vertex Set}, which enables us to generalize a recent result of Papadopoulos and Tzimas.

\medskip
\noindent
{\bf Keywords.} {odd cycle transversal, feedback vertex set, $H$-free graph, complexity dichotomy}
\end{abstract}

\section{Introduction}\label{s-intro}

For a \emph{transversal} problem, one seeks to find a small set of vertices within a given graph that intersects every subgraph of a specified kind. 
Two problems of this type are {\sc Feedback Vertex Set} and {\sc Odd Cycle Transversal}, where the objective is to find a 
small set~$S$ of vertices that intersects, respectively, every cycle and every cycle with an odd number of vertices. 
Equivalently, when $S$ is deleted from the graph, what remains is  
a forest or a bipartite graph, respectively.

For a \emph{subset transversal} problem, we are also given a vertex subset~$T$ and we must find a small set of vertices that intersects every subgraph of a specified kind \emph{that also contains a vertex of $T$}.  
An {\it (odd) $T$-cycle} is a cycle of the graph (with an odd number of vertices) that intersects~$T$. 
A set $S_T\subseteq V$~is a {\it $T$-feedback vertex set} or an {\it odd $T$-cycle transversal} of a graph $G=(V,E)$ if $S_T$ has at least one vertex of, respectively, every $T$-cycle or every odd $T$-cycle; see also Fig.~\ref{f-example}.
A \emph{(non-negative) weighting} of $G$ is a function 
$w: V\to \mathbb{R}^+$.
For $v \in V$, $w(v)$ is the \emph{weight} of $v$, and for $S \subseteq V$, the weight $w(S)$ of $S$ is the sum of the weights of the vertices in~$S$.   
In a  \emph{weighted subset transversal} problem the task is to find a transversal whose weight is less than a prescribed bound.
We study the following problems:

\problemdef{{\sc Weighted Subset Feedback Vertex Set}}
{a graph $G$, a subset $T\subseteq V(G)$, a non-negative vertex weighting~$w$ of $G$ and an integer $k\geq 1$.}
{does $G$ have a $T$-feedback vertex set $S_T$ with $w(S_T)\leq k$?} 

\problemdef{{\sc Weighted Subset Odd Cycle Transversal}}
{a graph $G$, a subset $T\subseteq V(G)$, a non-negative vertex weighting~$w$ of $G$ and an integer $k\geq 1$.}
{does $G$ have an odd $T$-cycle transversal $S_T$ with $w(S_T)\leq k$?} 

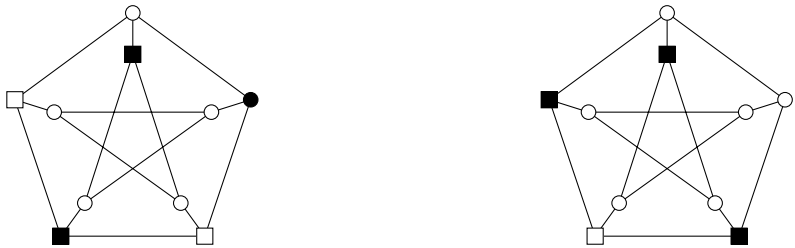
\begin{figure}[b]
\begin{center}
\begin{minipage}{0.45\textwidth}
\centering
\begin{tikzpicture}[xscale=0.55, yscale=0.55]
\draw (0,2)--(1.16,-1.6)--(-1.9,0.6)--(1.9,0.6)--(-1.16,-1.6)--(0,2) (-2.85,0.9)--(-1.74,-2.4)--(1.74,-2.4)--(2.85,0.9)--(0,3)--(-2.85,0.9) 
(-1.9,0.6)--(-2.85,0.9) (-1.16,-1.6)--(-1.74,-2.4) (1.16,-1.6)--(1.74,-2.4) (1.9,0.6)--(2.85,0.9) (0,2)--(0,3);
\draw[fill=white](-1.9,0.6) circle [radius=5pt] (-1.16,-1.6) circle [radius=5pt] (1.16,-1.6) circle [radius=5pt] (1.9,0.6) circle [radius=5pt] 
(0,2) node[regular polygon,regular polygon sides=4,draw,fill=black,scale=0.7pt] {} (-2.85,0.9) node[regular polygon,regular polygon sides=4,draw,fill=white,scale=0.7pt] {} 
(-1.74,-2.4) node[regular polygon,regular polygon sides=4,draw,fill=black,scale=0.7pt] {} (1.74,-2.4) node[regular polygon,regular polygon sides=4,draw,fill=white,scale=0.7pt] {} (0,3) circle [radius=5pt];
\draw[fill=black](2.85,0.9) circle [radius=5pt];
\end{tikzpicture}
\end{minipage}
\qquad
\begin{minipage}{0.45\textwidth}
\centering
\begin{tikzpicture}[xscale=0.55, yscale=0.55]
\draw (0,2)--(1.16,-1.6)--(-1.9,0.6)--(1.9,0.6)--(-1.16,-1.6)--(0,2) (-2.85,0.9)--(-1.74,-2.4)--(1.74,-2.4)--(2.85,0.9)--(0,3)--(-2.85,0.9)
(-1.9,0.6)--(-2.85,0.9) (-1.16,-1.6)--(-1.74,-2.4) (1.16,-1.6)--(1.74,-2.4) (1.9,0.6)--(2.85,0.9) (0,2)--(0,3);
\draw[fill=white] (-1.9,0.6) circle [radius=5pt] (-1.16,-1.6) circle [radius=5pt] (1.16,-1.6) circle [radius=5pt] (1.9,0.6) circle [radius=5pt]
(2.85,0.9) circle [radius=5pt] (0,2) node[regular polygon,regular polygon sides=4,draw,fill=black,scale=0.7pt] {}
(-2.85,0.9) node[regular polygon,regular polygon sides=4,draw,fill=black,scale=0.7pt] {} (-1.74,-2.4) node[regular polygon,regular polygon sides=4,draw,fill=white,scale=0.7pt] {}
(1.74,-2.4) node[regular polygon,regular polygon sides=4,draw,fill=black,scale=0.7pt] {}(0,3) circle [radius=5pt];
\end{tikzpicture}
\end{minipage}
\caption{Two examples (from~\cite{BJPP20}) of the Petersen graph with the set $T$ indicated by square vertices. The set $S_T$ of black vertices forms both an odd $T$-cycle transversal and a $T$-feedback vertex set. On the left, $S_T\cap T\neq \emptyset$. On the right, $S_T\subseteq T$.}\label{f-example}.
\end{center}
\end{figure}

\noindent
Both problems are \NP-complete 
even when the weighting function is~$1$ and \mbox{$T=V$}.
We continue a systematic study of transversal problems on hereditary graph classes, focusing on the weighted subset variants.  \emph{Hereditary} graph classes can be characterized by a set of forbidden induced subgraphs.
We begin with the case where this set has size~$1$:  the class of graphs that, for some graph~$H$, do not contain~$H$ as an induced subgraph; a graph in this class is said to be {\it $H$-free}.

\smallskip
\noindent
{\bf Past Results.}
We first note some \NP-completeness results for the special case where $w\equiv 1$ and $T=V$, which corresponds to the original problems
{\sc Feedback Vertex Set} and {\sc Odd Cycle Transversal}.  These results immediately imply \NP-completeness for the weighted subset problems.

By Poljak's construction~\cite{Po74}, for every integer~$g\geq 3$, {\sc Feedback Vertex Set}  is \NP-complete for graphs of finite girth at least~$g$ (the girth of a graph is the length of its shortest cycle).  There is an analogous result for {\sc Odd Cycle Transversal}~\cite{CHJMP18}.  
It has also been shown that {\sc Feedback Vertex Set}~\cite{Mu17b} and {\sc Odd Cycle Transversal}~\cite{CHJMP18} are \NP-complete for line graphs and, therefore, also for claw-free graphs.  Thus the two problems are \NP-complete for the class of $H$-free graphs whenever $H$ contains a cycle or claw.  Of course, a graph with no cycle is a forest, and a forest with no claw has no vertex of degree at least 3.  Hence, we need now only focus on the case where $H$ is a {\it linear forest}, that is, a collection of disjoint paths. 

There is no linear forest $H$ for which {\sc Feedback Vertex Set} on $H$-free graphs is known to be \NP-complete, but 
for {\sc Odd Cycle Transversal} we can take $H=P_2+P_5$ or $H=P_6$, as the latter problem
 is \NP-complete even for $(P_2+P_5,P_6)$-free graphs~\cite{DFJPPP19}. It is known that 
 {\sc Subset Feedback Vertex Set}~\cite{FHKPV14} 
 and  {\sc Subset Odd Cycle Transversal}~\cite{BJPP20}, which are the special cases with $w\equiv 1$, are
\NP-complete for $2P_2$-free graphs; in fact, these results were proved for split graphs which form a proper subclass of $2P_2$-free graphs.
For the weighted subset problems, there is just one additional case of \NP-completeness currently known, from the interesting recent work of 
Papadopoulos and Tzimas~\cite{PT20} as part of the following dichotomy. 

\begin{theorem}[\cite{PT20}]\label{t-known}
{\sc Weighted Subset Feedback Vertex Set} on $sP_1$-free graphs is polynomial-time solvable if 
$s\leq 4$ and \NP-complete if $s\geq 5$.
\end{theorem}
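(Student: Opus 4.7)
The strategy splits naturally into the polynomial-time algorithm for $s \leq 4$ and the hardness for $s \geq 5$. For the hardness, since $sP_1$-freeness is preserved as $s$ grows, it suffices to prove \NP-completeness for $s = 5$. The plan is to reduce from a classical \NP-hard problem such as \textsc{3-SAT} or a restricted \textsc{Vertex Cover} variant, engineering gadgets built from a few cliques whose union keeps the global independence number equal to $4$. The vertex weights (together with the choice of $T$) would encode truth-values of variables and the satisfiability of clauses, while the forbidden $T$-cycles enforce the logical constraints. The main difficulty here is to simultaneously keep the whole construction $5P_1$-free and faithfully translate the combinatorial constraints; the use of weights is crucial, since the unweighted version is only known to be hard on $2P_2$-free graphs, and the gadgets must therefore exploit weighting heavily to compensate for the strong structural restriction.

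For the polynomial-time side on $4P_1$-free graphs, the key structural observation is that any tree on $n$ vertices is bipartite and hence has an independent set of size at least $\lceil n/2\rceil$; therefore, in a $4P_1$-free graph every induced tree has at most $6$ vertices. For any $T$-feedback vertex set $S_T$, each connected component of $G - S_T$ containing a vertex of $T$ must induce a tree, so it sits on at most $6$ vertices, and there are only $O(n^6)$ candidate vertex subsets to consider for such ``$T$-trees''. The plan is to enumerate these candidates and then determine an optimal vertex-disjoint subfamily $\mathcal{C}$ of $T$-trees together with a minimum-weight set $S_T$ that (i) contains every vertex of $T \setminus V(\mathcal{C})$, (ii) separates each tree of $\mathcal{C}$ from the rest of $G$, and (iii) deletes any chord of an element of $\mathcal{C}$.

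The main obstacle is solving the residual combinatorial task, since a naive set-cover formulation for selecting $\mathcal{C}$ is itself \NP-hard and one must exploit $4P_1$-freeness more carefully. My plan is to recast the task as a weighted separation problem: once a disjoint admissible family of candidate $T$-trees is fixed, computing the minimum-weight boundary reduces to a polynomial collection of minimum-weight vertex-cut instances, which are solvable by standard network-flow machinery. The hardest step will be to show that the global optimization over $\mathcal{C}$ decomposes into polynomially many such instances, most likely by branching on the at-most-three-element independent sets that witness how each $T$-tree interacts with the rest of $G$, or by phrasing the selection as a matroid-intersection problem whose ground set is already small thanks to the size bound of $6$ on each $T$-tree. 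If this decomposition works, the overall algorithm runs in polynomial time, matching the dichotomy claim.
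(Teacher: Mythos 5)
This theorem is quoted from Papadopoulos and Tzimas~\cite{PT20}; the paper does not reprove it, but it reproduces the hardness reduction in the analogous Lemma~\ref{l-5p1} and generalises the algorithmic half in Lemma~\ref{l-3p1p2f}. Measured against that, your proposal has a genuine flaw on the algorithmic side. You claim that every connected component of $G-S_T$ containing a vertex of $T$ induces a tree and hence has at most $6$ vertices. This is false: a $T$-feedback vertex set only needs to hit cycles that pass through $T$, so $G-S_T$ is a $T$-forest, not a forest. A component may consist of an arbitrarily large clique of non-terminal vertices with a single pendant vertex of $T$ attached; it contains no $T$-cycle, yet it is neither a tree nor of bounded size. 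Consequently your $O(n^6)$ enumeration of candidate ``$T$-trees'' misses most feasible solutions, and everything downstream (the disjoint family $\mathcal{C}$, the separation step) rests on this. The correct structural analysis --- in~\cite{PT20} and in the paper's extension to $(3P_1+P_2)$-free graphs --- instead splits $V\setminus S_T$ into the set $O$ of vertices lying on some cycle of the residual graph and the set $R$ of remaining forest vertices, shows that $G[O]$ has at most two connected components and that it is $R$ (not the components meeting $T$) whose size is bounded via the independence-number argument, and only then enumerates the small pieces and invokes minimum-weight vertex cuts. Your instinct to reduce to vertex-cut computations is in the right spirit, but it is applied to the wrong decomposition.

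On the hardness side your text is a plan rather than a proof: no gadgets are exhibited. The actual reduction is considerably simpler than a SAT encoding. Start from \textsc{Vertex Cover} on a $3$-partite graph with parts $X_1,X_2,X_3$, turn each $X_i$ into a clique, add vertices $r_1,r_2,r_3$ and a single terminal $t$ with $T=\{t\}$, join $r_i$ to $X_i\cup\{t\}$, and give the four new vertices weight $|V(G)|$ so that no minimum-weight solution touches them. The sets $X_i\cup\{r_i\}$ are three cliques and $t$ is the only remaining vertex, so the graph is $5P_1$-free, and an edge $x_1x_2$ between distinct parts that survives the deletion yields exactly the $T$-cycle $t\,r_1x_1x_2r_2\,t$. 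Your intuition that the weights are essential to compensate for the structural restriction is correct, but as written the proposal does not constitute a proof of either direction.
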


\noindent
The unweighted version of {\sc Subset Feedback Vertex Set} can be solved in polynomial time for $sP_1$-free graphs for every $s\geq 1$~\cite{PT20}.
In contrast, for many transversal problems, the complexities on the weighted and unweighted versions for $H$-free graphs align; see, for example {\sc Vertex Cover}~\cite{GKPP19}, {\sc Connected Vertex Cover}~\cite{JPP20} and {\sc (Independent) Dominating Set}~\cite{LMMZ20}. 
Thus {\sc Subset Feedback Vertex Set} is one of the few known problems for which, on certain hereditary graph classes, the (unweighted) problem is polynomial-time solvable, but the weighted variant is $\NP$-complete.

The other known polynomial-time algorithm for {\sc Weighted Subset Feedback Vertex Set} on $H$-free graphs is for the case where $H=P_4$.
This can be proven in two ways: {\sc Weighted Subset Feedback Vertex Set} is polynomial-time solvable for permutation graphs~\cite{PT19}
and also for graphs for which we can find a decomposition of constant mim-width in polynomial time~\cite{BPT19}; both classes contain the class of $P_4$-free graphs. To the best of our knowledge, algorithms for {\sc Weighted Subset Odd Cycle Transversal} on $H$-free graphs have not previously been studied.

We now mention the polynomial-time results on $H$-free graphs for the unweighted subset variants of the problems (which do not imply anything for the weighted subset versions). It is known that {\sc Subset Odd Cycle Transversal}~\cite{BJPP20} is polynomial-time solvable on $P_4$-free graphs~\cite{BJPP20}, and in Section~\ref{s-rm} we show that this result can be generalized to the weighted variant in a straightforward way. It is also known that {\sc Subset Feedback Vertex Set} and {\sc Subset Odd Cycle Transversal} are polynomial-time solvable for $(sP_1+P_3)$-free graphs for every integer $s\geq 0$~\cite{BJPP20}.   

Additionally, {\sc Weighted Feedback Vertex Set} is polynomial-time solvable on $P_5$-free graphs~\cite{ACPRS20} 
and $sP_3$-free graphs for every integer~$s\geq 1$~\cite{PPR21}. 
Moreover, {\sc Odd Cycle Transversal} is polynomial-time solvable on $sP_2$-free graphs for every $s\geq 1$~\cite{CHJMP18}. 
The latter result can be directly generalized to the weighted variant (see Appendix~\ref{a-sp2}).
Finally, {\sc Weighted Odd Cycle Transversal} is polynomial-time solvable for $(sP_1+P_3)$-free graphs; this follows from a straightforward adaptation of the proof for the unweighted variant given in~\cite{DFJPPP19} (see Appendix~\ref{a-b}).\footnote{The result for {\sc Odd Cycle Transversal} on $(sP_1+P_3)$-free graphs ($s\geq 1$) from~\cite{DFJPPP19} was shown before
the corresponding result for {\sc Subset Odd Cycle Transversal} was proven in~\cite{BJPP20}.}

\medskip
\noindent
{\bf Our Results.}
We enhance the current understanding of the two weighted subset transversal problems, presenting new polynomial-time algorithms for {\sc Weighted Subset Odd Cycle Transversal} and {\sc Weighted Subset Feedback Vertex Set} on $H$-free graphs for certain $H$. We highlight that {\sc Subset Odd Cycle Transversal} is a problem whose weighted variant is harder than its unweighted variant.
Our main result is the following almost-complete dichotomy. We write $H\ssi G$, or $G \si H$ 
to say that $H$ is an induced subgraph of $G$ (that is, $H$ can be obtained from $G$ by a sequence of vertex deletions).

\begin{theorem}\label{t-realmain}
Let $H$ be a graph with $H\notin \{2P_1+P_3,P_1+P_4,2P_1+P_4\}$. 
Then {\sc Weighted Subset Odd Cycle Transversal} on $H$-free graphs is polynomial-time solvable if 
$H\ssi 3P_1+P_2$, $P_1+P_3$, or $P_4$, and is \NP-complete otherwise.
\end{theorem}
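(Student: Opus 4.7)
The proof splits into a tractability direction and a hardness direction, matching the two cases in the statement.

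For hardness, I would assemble several ingredients. If $H$ contains a cycle, the Poljak-style construction gives \NP-completeness on $H$-free graphs; if $H$ contains a claw, the known hardness on claw-free graphs applies. So $H$ may be assumed to be a linear forest. If $\alpha(H) \geq 5$, that is, $5P_1 \ssi H$, then $5P_1$-free graphs form a subclass of $H$-free graphs, so it suffices to establish a new \NP-completeness on $5P_1$-free graphs, via an explicit polynomial reduction in the spirit of the $5P_1$-free reduction underlying Theorem \ref{t-known}, adapted to encode parity of cycles through $T$. If $H$ contains $2P_2$, the existing \NP-completeness on $2P_2$-free (in fact split) graphs transfers. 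A routine case check then shows that the linear forests avoiding both $5P_1$ and $2P_2$ are exactly the induced subgraphs of $3P_1+P_2$, $P_1+P_3$, or $P_4$, together with the three excluded graphs $2P_1+P_3$, $P_1+P_4$, and $2P_1+P_4$; this is precisely the partition the statement predicts.

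For tractability I must handle three classes. On $P_4$-free graphs the mim-width is bounded and a decomposition is computable in polynomial time, so the algorithm of Bergougnoux, Papadopoulos, and Telle gives the result directly. For $(P_1+P_3)$-free and $(3P_1+P_2)$-free graphs I would design new algorithms. The approach in each case is to enumerate, in polynomial time, a bounded-size partial configuration (for instance a small guessed subset of $S_T$ together with a chosen bipartition on a neighbourhood) fine enough that the parity interactions with $T$ are fully determined; each enumerated configuration then reduces to a weighted min-cut or weighted matching subproblem solvable in polynomial time. Because the weights are arbitrary, direct branching on a small solution is not available, so the flow- or matching-based subroutine at the leaves of the enumeration is essential. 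For $(3P_1+P_2)$-free graphs, this approach will generalize the strategy used by Papadopoulos and Tzimas for \textsc{Weighted Subset Feedback Vertex Set} on $4P_1$-free graphs, noting that $4P_1 \ssi 3P_1+P_2$.

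The main obstacle is the algorithm for $(3P_1+P_2)$-free graphs. In $4P_1$-free graphs every independent set has size at most $3$, producing a rigid complement structure; permitting a disjoint edge allows nontrivial matchings between otherwise independent vertices that the enumeration must handle uniformly. Simultaneously, passing from feedback vertex set to odd cycle transversal forces the residual weighted optimization to enforce the parity of $T$-cycles, rather than merely their existence, which I expect to be the most delicate part of the argument.
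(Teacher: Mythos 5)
Your hardness direction is sound and is essentially the paper's: the cycle and claw cases follow from known hardness of {\sc Odd Cycle Transversal}, the $2P_2$ case from the split-graph hardness of the unweighted subset problem, and the $5P_1$ case from the Papadopoulos--Tzimas reduction from {\sc Vertex Cover} on $3$-partite graphs (the paper checks that every relevant $T$-cycle in that construction is odd, so the reduction transfers verbatim); your classification of the remaining linear forests is also correct. The gaps are all on the tractability side. First, for $P_4$-free graphs the Bergougnoux--Papadopoulos--Telle mim-width algorithm you invoke is for {\sc (Weighted) Subset Feedback Vertex Set}, not for {\sc Subset Odd Cycle Transversal}, so it does not ``give the result directly''; the paper instead proves this case by a dynamic program over the modified cotree, using a polynomial-time subroutine for {\sc Weighted Subset Vertex Cover} on cographs at the join nodes. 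Some replacement argument (e.g.\ bounded clique-width plus a LinEMSOL expression, or the cotree DP) is needed here.

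Second, and more seriously, for $(3P_1+P_2)$-free graphs your plan---enumerate a bounded partial configuration, then call a weighted min-cut routine---has the right shape, but the entire substance of the proof is the structural analysis that makes the configuration bounded, and that is missing. The paper partitions a candidate $T$-bipartite remainder $B_T$ into the set $O$ of vertices lying on an odd cycle of $G[B_T]$ and the set $R$ of remaining (``even'') vertices, and proves: $G[O]$ has at most two components; if $G[O]$ is connected then $|R|\leq 8$ and each component of $G[R]$ sees at most one ``connector'' in $O$, apart from an exceptional ``$2$-clique'' solution type that is handled by a separate vertex-cut computation between two guessed terminals; if $G[O]$ has two components then $R$ is a clique of size at most $2$ and each component has at most one connector. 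Only after these claims does the polynomial enumeration plus {\sc Weighted Vertex Cut} become available, and the parity difficulty you flag is resolved precisely because the cut subroutine is applied only after $T$ and almost all of $N(T)$ have been deleted, so surviving odd cycles cannot be $T$-cycles. Third, the $(P_1+P_3)$-free case does not fit your enumerate-and-cut template at all: the paper handles it by passing to co-components, invoking Olariu's theorem that each co-component is $3P_1$-free or $P_3$-free, and showing that a $T$-bipartite set meeting $T$ intersects at most two co-components in a very restricted way, which reduces the problem to the previous two cases plus {\sc Weighted Subset Independent Set} on $3P_1$-free graphs. As written, your proposal establishes the \NP-complete half but leaves all three polynomial cases unproved.
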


\noindent
As a consequence, we obtain a dichotomy analogous to Theorem~\ref{t-known}.

\begin{corollary}
The {\sc Weighted Subset Odd Cycle Transversal} problem on $sP_1$-free graphs is polynomial-time solvable if 
$s\leq 4$ and is \NP-complete if $s\geq 5$.
\end{corollary}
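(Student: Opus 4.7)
The plan is to derive the corollary as an immediate consequence of Theorem~\ref{t-realmain} by tracking induced-subgraph relationships, using the (contrapositive) fact that whenever $H_1 \ssi H_2$, every $H_1$-free graph is also $H_2$-free.

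For the polynomial case $s \leq 4$, the chain I would establish is $sP_1 \ssi 4P_1 \ssi 3P_1+P_2$. The first inclusion is trivial: any $s$ of the four vertices of $4P_1$ form an induced $sP_1$. For the second, observe that $3P_1+P_2$ has independence number $4$, namely the three isolated vertices together with either endpoint of the $P_2$-component form an induced $4P_1$. Hence every $sP_1$-free graph with $s\leq 4$ is $(3P_1+P_2)$-free, and Theorem~\ref{t-realmain} supplies a polynomial-time algorithm.

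For the hard case $s \geq 5$, I would first apply Theorem~\ref{t-realmain} with $H=5P_1$: this graph does not lie in the exceptional set $\{2P_1+P_3,\,P_1+P_4,\,2P_1+P_4\}$, and, having independence number $5$, it is not an induced subgraph of $3P_1+P_2$, $P_1+P_3$, or $P_4$; so the theorem gives \NP-completeness on $5P_1$-free graphs. Since $5P_1 \ssi sP_1$ for every $s \geq 5$, the class of $5P_1$-free graphs is contained in the class of $sP_1$-free graphs, so \NP-completeness lifts.

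Because the corollary is essentially a bookkeeping consequence of Theorem~\ref{t-realmain}, I do not expect a real obstacle; the only point that needs care is getting the direction of the implication ``$H_1 \ssi H_2 \Rightarrow$ \emph{$H_1$-free $\Rightarrow$ $H_2$-free}'' right, and checking that $5P_1$ genuinely fails to be an induced subgraph of any of the three polynomial-case graphs listed in Theorem~\ref{t-realmain}.
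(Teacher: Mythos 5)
Your proposal is correct and matches the paper, which states the corollary as an immediate consequence of Theorem~\ref{t-realmain} with exactly this bookkeeping: $sP_1\ssi 3P_1+P_2$ for $s\le 4$ gives the polynomial cases, and the $5P_1$ hardness (Lemma~\ref{l-5p1}) lifts to $sP_1$-free graphs for $s\ge 5$ since $5P_1$-free graphs form a subclass. All the induced-subgraph checks you perform (independence numbers, exclusion from the three exceptional graphs, direction of the freeness implication) are the right ones and are carried out correctly.
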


\noindent
For the hardness part of Theorem~\ref{t-realmain} it suffices to show hardness for 
$H=5P_1$; this follows from the same reduction used by Papadopoulos and Tzimas~\cite{PT20} to prove 
 Theorem~\ref{t-known}. The three tractable cases, where $H\in \{P_4,P_1+P_3,3P_1+P_2\}$, are all new. Out of these cases, $H=3P_1+P_2$ is the most involved. 
 For this case we use a different technique to that used in~\cite{PT20}.
Although we also reduce to the problem of finding a minimum weight vertex cut that separates two given terminals, our technique relies less on explicit distance-based arguments, and we devise a method for distinguishing cycles according to parity. 
 Our technique also enables us to extend the result of~\cite{PT20} on {\sc Weighted Subset Feedback Vertex Set} from $4P_1$-free graphs to $(3P_1+P_2)$-free graphs, leading to the same almost-complete dichotomy for {\sc Weighted Subset Feedback Vertex Set}.

\begin{theorem}\label{t-realmain2}
Let $H$ be a graph with $H\notin \{2P_1+P_3,P_1+P_4,2P_1+P_4\}$. 
Then {\sc Weighted Subset Feedback Vertex Set} on $H$-free graphs is polynomial-time solvable if 
$H\ssi 3P_1+P_2$, $P_1+P_3$, or $P_4$, and is \NP-complete otherwise.
\end{theorem}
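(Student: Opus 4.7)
The plan is to follow the same architecture as Theorem~\ref{t-realmain}: establish NP-hardness for every $H$ outside the tractable and open cases, and give polynomial-time algorithms for the three tractable cases $H\in\{P_4,P_1+P_3,3P_1+P_2\}$ (and their induced subgraphs). For the hardness direction, suppose $H$ is not an induced subgraph of $3P_1+P_2$, $P_1+P_3$, or $P_4$, and is not in the open set $\{2P_1+P_3,P_1+P_4,2P_1+P_4\}$. If $H$ contains a cycle, then Poljak's construction~\cite{Po74} will yield NP-completeness of {\sc Feedback Vertex Set}, and hence {\sc Weighted Subset Feedback Vertex Set}, on $H$-free graphs; if $H$ contains a claw, Speckenmeyer's NP-completeness~\cite{Sp83} of {\sc Feedback Vertex Set} on line graphs suffices. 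Otherwise $H$ is a linear forest, and a short inspection of the remaining possibilities shows that any such $H$ must contain $5P_1$ (yielding hardness via Theorem~\ref{t-known}) or $2P_2$ (yielding hardness via the split-graph reduction of~\cite{FHKPV14}, which already applies to the unweighted {\sc Subset Feedback Vertex Set}).

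For the tractable side, the case $H=P_4$ is immediate from the algorithm of Bergougnoux, Papadopoulos, and Telle~\cite{BPT19} combined with the well-known facts that $P_4$-free graphs (cographs) have mim-width at most $1$ and admit an efficiently computable decomposition. For $H=P_1+P_3$ and $H=3P_1+P_2$ I plan to adapt the algorithms developed in the proof of Theorem~\ref{t-realmain} for {\sc Weighted Subset Odd Cycle Transversal}. The high-level strategy in each case is to branch over a constant-sized guess (a candidate partial solution together with a bounded amount of local structural information about $T$), reduce the residual problem to polynomially many instances of minimum-weight vertex cut between two terminals, and solve the latter by standard max-flow techniques. A key simplification available for WSFVS is that no parity tracking is required: the auxiliary bipartite-double-cover construction used in the WSOCT algorithm to target only \emph{odd} $T$-cycles can be dropped, since destroying \emph{all} $T$-cycles is exactly what a minimum-weight $T$-separator in the appropriately contracted graph accomplishes.

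The main obstacle will be the $(3P_1+P_2)$-free case, which strictly extends the $4P_1$-free result of Papadopoulos and Tzimas~\cite{PT20}. The extra edge in $3P_1+P_2$ relative to $4P_1$ permits richer neighborhood configurations around $T$, and the technical core of the argument will be to show that nevertheless the neighborhood of $T$ admits a structured partition into a polynomial number of ``types'' on which the guess-and-cut reduction goes through correctly. Once this structural analysis is in place, the $(P_1+P_3)$-free case will be recoverable by the same template in a simpler form, completing the dichotomy.
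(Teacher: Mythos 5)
Your overall architecture coincides with the paper's: hardness comes from $5P_1$ via Theorem~\ref{t-known}, from $2P_2$ via the split-graph reduction of~\cite{FHKPV14}, and from cycles~\cite{Po74} and claws~\cite{Sp83}; your check that every remaining linear forest outside the tractable and open sets contains $5P_1$ or $2P_2$ as an induced subgraph is correct; and the $P_4$ case via bounded mim-width~\cite{BPT19} is exactly the paper's route. The problem is that the two lemmas carrying essentially all of the technical weight --- polynomial-time solvability on $(3P_1+P_2)$-free and on $(P_1+P_3)$-free graphs --- are only announced, not proved. For the $(3P_1+P_2)$-free case the paper (Lemma~\ref{l-3p1p2f}) does not proceed by partitioning the neighbourhood of $T$ into ``types''; it splits solutions into \emph{non-mixed} ones (handled either by enumerating minimal feedback vertex sets of a $4P_2$-free graph, Lemma~\ref{l-sp2f}, or by taking $S_T=T$) and \emph{mixed} ones, for which it writes the surviving $T$-forest as $O\cup R$ ($O$ the vertices on cycles, $R$ the rest) and proves a sequence of structural claims: $G[O]$ has at most two components; if $G[O]$ is connected then $G[R]$ has independence number at most $4$, so $|R|\le 8$ and $R$ can be guessed, with at most one ``connector'' per component of $G[R]$; if $G[O]$ has two components then $R$ is a clique of size at most $2$ and each component has at most one connector, and only the two-connector subcase invokes {\sc Weighted Vertex Cut}. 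Your proposal contains no substitute for these claims, and without them the ``guess-and-cut'' reduction has no justification that the guess space is polynomial.

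Your description of the $(P_1+P_3)$-free case as ``the same template in a simpler form'' also does not match what is actually needed. The paper's Lemma~\ref{l-p1p3f} is not a guess-and-cut argument at all: it decomposes $G$ into co-components, applies Olariu's theorem (Lemma~\ref{l-ol}) to conclude each co-component is $3P_1$-free or $P_3$-free, shows that a $T$-forest meeting $T$ intersects at most two co-components and analyses the possible intersections (independent sets, singletons, $T$-independent sets), and then calls the $3P_1$-free algorithm, maximum-weight independent set, and a {\sc Weighted Subset Independent Set} subroutine (Lemma~\ref{l-wsis}). Finally, a minor inaccuracy: the WSOCT algorithm you propose to adapt does not use a bipartite double cover; parity is handled there by a direct structural analysis of odd versus even vertices, and the simplification you correctly anticipate for the feedback-vertex-set version is that ``odd cycle'' is replaced by ``cycle'' throughout that analysis.
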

 
\noindent
We refer to Table~\ref{t-table} for an overview of the current knowledge of the problems, including the results of this paper. 

\newcommand\Tstrut{\rule{0pt}{2.6ex}}         
\newcommand\Bstrut{\rule[-0.9ex]{0pt}{0pt}}   

\begin{table}[t]
\begin{tabularx}{\textwidth}{ p{1.5cm} p{3.25cm} X  p{2.6cm}  }
    \hline
    & polynomial-time 
    & unresolved & $\NP$-complete\Tstrut\Bstrut \\
  \hline
    {\sc (W)FVS} &  $H \ssi P_5$ or \newline \phantom{$H \ssi$} $sP_3$ for $s\geq 1$  & $H\si P_1+P_4$ & none\Tstrut\Bstrut \\  \hline
   {\sc (W)OCT} & $H=P_4$ or \newline  $H\ssi sP_1+P_3$ or \newline \phantom{$H\ssi$} $sP_2$ for $s\geq 1$ &    
    $H=sP_1+P_5$ for $s\geq 0$  or \newline $H=sP_1+tP_2+uP_3+vP_4$ \newline for $s,t,u \ge 0$, $v\geq 1$
    \newline
    with $\min\{s,t,u\}\geq 1$ if $v=1$, or
    \newline
    $H=sP_1+tP_2+uP_3$ for $s,t\geq 0$, $u\geq 1$
    with $u\geq 2$ if $t=0$
     & $H\si P_6$ or $P_2+P_5$\Tstrut\Bstrut \\  \hline
{\sc SFVS}, \newline
{\sc SOCT} & $H=P_4$ or \newline $H\ssi sP_1+P_3$ for $s\geq 1$ & $H = sP_1+P_4$ for $s\geq 1$ & $H \si 2P_2$\Tstrut\Bstrut \\  \hline
{\sc WS\nolinebreak FVS}, \newline
{\sc WSOCT} & $H \ssi {P_4},\textcolor{blue}{P_1+P_3}$, or \newline \phantom{$H\ssi$} $\textcolor{blue}{3P_1+P_2}$  & $H\in \{2P_1+P_3,P_1+P_4, 2P_1+P_4\}$ & $H \si  5P_1$ or $2P_2$\Tstrut\Bstrut \\
   
\hline  \\[-0.25cm]
\end{tabularx}
 \caption{
 The complexity of {\sc (Weighted) Feedback Vertex Set} ((W)FVS), {\sc (Weighted) Odd Cycle Transversal} ((W)OCT), and their subset (S) and weighted subset (WS) variants, when restricted to $H$-free graphs for linear forests $H$.  All problems are $\NP$-complete for $H$-free graphs when $H$ is not a linear forest. The four blue cases (two for WSFVS, two for WSOCT) are the {\it algorithmic} contributions of this paper; see also Theorems~\ref{t-realmain} and~\ref{t-realmain2}.}
\label{t-table}
\end{table}

\section{Preliminaries}

Let $G=(V,E)$ be a graph. If $S\subseteq V$, then~$G[S]$ denotes the subgraph~of~$G$ induced by~$S$, and~$G-S$ is the graph $G[V\setminus S]$. The path on $r$ vertices is denoted $P_r$.
We say that~$S$ is {\it independent} if $G[S]$ has no edges, and that $S$ is a {\it clique} and $G[S]$ is {\it complete} if every pair of vertices in $S$ is joined by an edge.

If~$G_1$ and~$G_2$ are vertex-disjoint graphs, then
the \emph{union} operation $+$ creates the disjoint union $G_1+\nobreak G_2$ having vertex set $V(G_1)\cup V(G_2)$ and edge set $E(G_1)\cup E(G_2)$.  By $sG$, we denote the disjoint union of $s$ copies of $G$.
Thus $sP_1$ denotes the graph whose vertices form an independent set of size~$s$.
 
A {\it  (connected) component} of~$G$ is a maximal connected subgraph of $G$.
The graph $\overline{G}=(V,\{uv\; |\; uv\not \in E\; \mbox{and}\; u\neq v\})$ is the \emph{complement} of~$G$.
The \emph{neighbourhood} of a vertex $u\in V$ is the set $N_G(u)=\{v\; |\; uv\in E\}$. For $U\subseteq V$, we let $N_G(U)=\bigcup_{u\in U}N(u)\setminus U$.

Let $S$ and $T$ be two disjoint vertex sets of a graph $G$.
Then~$S$ is \emph{complete} to~$T$ if every vertex of~$S$ is adjacent to every vertex of~$T$, and~$S$ is \emph{anti-complete} to~$T$ if there are no edges between~$S$ and~$T$. 
In the first case, $S$ is also said to be \emph{complete} to~$G[T]$, and in the second case we say it is \emph{anti-complete} to~$G[T]$.

A graph is \emph{bipartite} if its vertex set can be partitioned into at most two independent sets.
A  graph is \emph{complete bipartite} if its vertex set can be partitioned into two independent sets~$X$ and~$Y$ such that $X$ is complete to~$Y$. If $X$ or $Y$ has size~$1$, the complete bipartite graph is a {\it star}; recall that $K_{1,3}$ is also called a claw.

\section{General Framework of the Algorithms}\label{s-general}

We first explain our general approach with respect to odd cycle transversals. Afterwards we modify our terminology for feedback vertex sets, but we note that our approach can be easily extended to other kinds of transversals as well. 

So, consider an instance $(G,T,w)$ of {\sc Weighted Subset Odd Cycle Transversal}.
Recall that a cycle is a {\it $T$-cycle} if it contains a vertex of $T$.
A subgraph of $G$ with no odd $T$-cycles is {\it $T$-bipartite}. Note that a subset $S_T\subseteq V$ is an odd $T$-cycle transversal if and only if $G[V\setminus S_T]$ is $T$-bipartite.
A {\it solution} for $(G,T,w)$ is an odd $T$-cycle transversal $S_T$. From now on,
whenever $S_T$ is defined,
we  let $B_T = V(G) \setminus S_T$ denote the vertex set of the corresponding $T$-bipartite graph.
If $u\in B_T$ belongs to at least one odd cycle of $G[B_T]$, then $u$ is an {\it odd} vertex of $B_T$. Otherwise, when $u\in B_T$ is not in any odd cycle of $G[B_T]$, we say that $u$ is an 
{\it even} vertex of $B_T$. Note that by definition every vertex in $T\cap B_T$ is even.
We let $O(B_T)$ and $R(B_T)$ denote the sets of odd and even vertices of $B_T$ (so $B_T=O(B_T)\cup R(B_T)$).

A solution $S_T$ is {\it neutral} if $B_T$ consists of only even vertices; in this case $S_T$ is an odd cycle transversal of $G$. We say that
$S_T$ is {\it $T$-full} if $B_T$ contains no vertex of $T$.
If $S_T$ is neither neutral nor $T$-full, then $S_T$ is a {\it mixed} solution.
We can now outline our approach 
to finding minimum weight odd $T$-cycle transversals:\\[-17pt]
\begin{enumerate}
\item Compute a neutral solution of minimum weight.
\item Compute a $T$-full solution of minimum weight.
\item Compute a mixed solution of minimum weight.
\item From the three computed solutions, take one of overall minimum weight.\\[-17pt]
\end{enumerate}
\noindent
As mentioned, a neutral solution is a minimum-weight odd cycle transversal. Hence, in Step~1, we will use existing polynomial-time algorithms from the literature for computing such an odd cycle transversal (these algorithms must be for the weighted variant). Step~2 is trivial: we can just set $S_T:=T$ (as $w$ is non-negative). Hence, most of our attention will go to Step~3.
For Step~3, we analyse the structure of the graphs $G[R(B_T)]$ and $G[O(B_T)]$ for a mixed solution $S_T$ and how these graphs relate  to each other.

For  {\sc Weighted Subset Feedback Vertex Set} we follow exactly the same approach, but we use slightly different terminology. 
A  subgraph of a graph $G=(V,E)$ is a {\it $T$-forest} if it has no $T$-cycles.  Note that a subset $S_T\subseteq V$ is a $T$-feedback vertex set if and only if $G[V\setminus S_T]$ is a $T$-forest. We write $F_T=V\setminus S_T$ in this case. 
If $u\in F_T$ belongs to at least one cycle of $G[F_T]$, then $u$ is a {\it cycle vertex} of $F_T$. Otherwise, if $u\in F_T$ is not in any cycle of $G[F_T]$, we say that $u$ is a {\it forest vertex} of $F_T$. By definition every vertex in $T\cap F_T$ is a forest vertex.

We obtain our results for {\sc Weighted Subset Feedback Vertex Set} by a simplification of our algorithms for {\sc Weighted Odd Cycle Transversal}. Hence, to explain our approach fully, we will now give a polynomial-time algorithm for 
{\sc Weighted Odd Cycle Transversal} for $(3P_1+P_2)$-free graphs.

\section{Weighted Subset Odd Cycle Transversal on ${\mathbf{(3P_1+P_2)}}$-free Graphs}\label{s-example}

We will prove that {\sc Weighted Subset Odd Cycle Transversal} is polynomial-time solvable for $(3P_1+P_2)$-free graphs using the framework from the previous section. 
We let $G=(V,E)$ be a $(3P_1+P_2)$-free graph with a vertex weighting~$w$, and let $T\subseteq V$.
For Step~1, we need the polynomial-time algorithm of~\cite{CHJMP18} for {\sc Odd Cycle Transversal} on $sP_2$-free graphs ($s\geq 1$), and thus on $(3P_1+P_2)$-free graphs (take $s=4$). 
The algorithm in \cite{CHJMP18} was  for the unweighted case, but it can be easily adapted for the weighted case as shown by Lemma~\ref{l-sp2} 
(see Appendix~\ref{a-sp2} for the proof).

\begin{lemma}\label{l-sp2}
For every integer $s\geq 1$, {\sc Weighted Odd Cycle Transversal} is polynomial-time solvable for $sP_2$-free graphs.
\end{lemma}

\noindent
As Step~2 is trivial, we need to focus on Step~3.
We will reduce to a classical problem, well known to be polynomial-time solvable by standard network flow techniques.

\optproblemdef{{\sc Weighted Vertex Cut}}{a graph $G=(V,E)$, two distinct 
non-adjacent terminals $t_1$ and $t_2$, and a non-negative vertex weighting~$w$.}{determine a set $S\subseteq V\setminus \{t_1,t_2\}$ of minimum weight such that  $t_1$ and $t_2$ are in different connected components of $G-S$.}
  
\noindent
For a mixed solution $S_T$, we let $O=O(B_T)$ and $R=R(B_T)$; recall that,
by the definition, 
$O\neq \emptyset$ and $R\cap T\neq \emptyset$ (see also Figure~\ref{f-OandR}). 
For our reduction to {\sc Weighted Vertex Cut}, we need some structural results first.

\medskip
\noindent
{\bf Structural Lemmas.}
As $O$ is nonempty, $G[O]$ has at least one connected component. 
We first bound the number of components of $G[O]$.

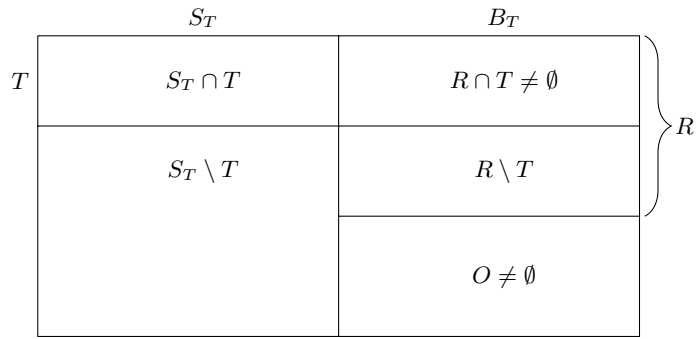
\begin{figure}
\begin{center}
\begin{tikzpicture}[scale=0.4]
\draw (-10,5)--(10,5)--(10,-5)--(-10,-5)--(-10,5) (0,5)--(0,-5) (-10,2)--(10,2) (0,-1)--(10,-1);
\node[above] at (-4.5,5) {$S_T$};
\node[above] at (5.5,5) {$B_T$};
\node[left] at (-10,3.5) {$T$};
\node at (-4.5,3.5) {$S_T\cap T$};
\node at (5.5,3.5) {$R \cap T \neq \emptyset$};
\node at (-4.5,0.5) {$S_T \setminus T$};
\node at (5.5,0.5) {$R \setminus T$};
\node at (5.5,-3) {$O \neq \emptyset$};
\draw [decorate,decoration={brace,amplitude=10pt},xshift=5pt,yshift=0pt] (10,5) -- (10,-1) node [black,midway,xshift=15pt] {$R$};

\end{tikzpicture}
\end{center}
\caption{The decomposition of $V$ when $S_T$ is a mixed solution. The sets $O=O(B_T)$ and $R=R(B_T)$ are the odd and even vertices of $B_T$, respectively.}
\label{f-OandR}
\end{figure}

\begin{lemma}\label{l-claim1}
Let $G=(V,E)$ be a $(3P_1+P_2)$-free graph, 
and let $T\subseteq V$.
For every mixed solution $S_T$, the graph $G[O]$ has at most two connected components.
\end{lemma}

\begin{proof} 
For contradiction, assume that $G[O]$ has at least three connected components $D_1$, $D_2$, $D_3$. As each $D_i$ contains an odd cycle, each $D_i$ has an edge. Hence, each $D_i$ must be a complete graph, otherwise one $D_i$, say $D_1$ has two non-adjacent vertices, which would induce together with a vertex of $D_2$ and an edge of $D_3$, a $3P_1+P_2$.

Recall that, as $S_T$ is mixed, $R$ is nonempty. Let $u\in R$. Then $u$ does not belong to any~$D_i$. Moreover, $u$ can be adjacent to at most one vertex of each $D_i$, otherwise $u$ and two of its neighbours in $D_i$ would form a triangle (as $D_i$ is complete) and $u$ would not be even. As each $D_i$ is a complete graph on at least three vertices, we can pick two non-neighbours of~$u$ in $D_1$, which form an edge, a non-neighbour of $u$ in $D_2$ and a non-neighbour of $u$ in $D_3$. These four vertices, together with $u$, induce a $3P_1+P_2$, a contradiction.  \qed
\end{proof}

\noindent
We now prove two lemmas, namely Lemmas~\ref{l-claim4} and~\ref{l-claim2}, that together show that we always have that $|R|\leq 8$.
If $G[O]$ is disconnected, then Lemma~\ref{l-claim4} proves the stronger result that $|R|\leq 2$, and if $G[0]$ is connected, we can use Lemma~\ref{l-claim2} and the fact that $G[R]$ is bipartite.

\begin{lemma}\label{l-claim4}
Let $G=(V,E)$ be a $(3P_1+P_2)$-free graph, 
and let $T\subseteq V$.
For every mixed solution $S_T$, if $G[O]$ is disconnected, then $R$ is a clique with $|R|\leq 2$.
\end{lemma}

\begin{proof}
For contradiction, suppose $R$ contains two non-adjacent vertices $u_1$ and $u_2$. Let $D$ and $D'$ be the two connected components of $G[O]$. Then $D$ has an odd cycle $C$ on vertices $v_1,\ldots,v_r$ for some $r\geq 3$ and $D'$ has an odd cycle $C'$ on vertices $w_1,\ldots,w_s$ for some $s\geq 3$.

Now, $u_1$ and $u_2$ are adjacent to at most one vertex of $C$, as otherwise they lie on an odd cycle in $G[B_T]$, which would contradict the fact that they are even vertices. Hence, as $r\geq 3$, we may assume that $v_1$ is not adjacent to $u_1$ nor to $u_2$ (see Figure~\ref{f-claim4}). Hence, at least one of $u_1$ and $u_2$ has a neighbour in $\{w_1,w_2\}$, otherwise $\{u_1,u_2,v_1,w_1,w_2\}$ would induce a $3P_1+P_2$. Say $u_1$ is adjacent to $w_1$. Similarly, one of $u_1$, $u_2$ has a neighbour in $\{w_2,w_3\}$. As~$u_1$ already has a neighbour in $C'$, we find that $u_1$ cannot be adjacent to $w_2$ or $w_3$, otherwise~$u_1$ would be in an odd cycle of $G[B_T]$, contradicting $u_1 \in R$. Hence, $u_2$ is adjacent to either~$w_2$ or $w_3$. So $u_1$ and $u_2$ each have a neighbour on $C'$ and these neighbours are not the same.

By the same reasoning, but with the roles of $C$ and $C'$ reversed, we find that $u_1$ and~$u_2$ also have (different) neighbours on $C$. However, we now find that there exists an odd cycle using $u_1$, $u_2$ and appropriate paths $P_C$ and $P_{C'}$ between their neighbours on $C$ and $C'$, respectively. We conclude that $R$ is a clique, and thus, as $G[R]$ is bipartite, $|R|\leq 2$. \qed
\end{proof}

\begin{figure}
\begin{center}
\begin{tikzpicture}[scale=0.6]
\node at (6,-1) {$C'$};
\node at (0,-1) {$C$};

\node at (13,5) {$R$};
\node at (13,-0.5) {$O$};

\draw[dashed] (2,5)--(6,5)  (6,5)--(2,0) (2,5)--(2,0);
\filldraw [black] (8,0) circle [radius=5pt] (3,-2) circle [radius=5pt] (2,-4) circle [radius=5pt];
\filldraw [black] (9,-2) circle [radius=5pt] (8,-4) circle [radius=5pt];
\draw (8,0)--(9,-2)--(8,-4) (2,0)--(3,-2)--(2,-4);

\node[left] at (2,5) {$u_1$};
\node[right] at (6,5) {$u_2$};
\node[right] at (2,0) {$v_1$};
\node[right] at (3,-2) {$v_2$};
\node[right] at (2,-4) {$v_3$};
\node[right] at (8,0) {$w_1$};
\node[right] at (9,-2) {$w_2$};
\node[right] at (8,-4) {$w_3$};

  \draw[decorate,decoration=snake] (2,0) to [out=180, in = 225](2,-4);
  \draw[decorate,decoration=snake] (8,0) to [out=180, in = 225](8,-4);

\draw [black, fill=white] (6,5) circle [radius=5pt] (2,5) circle [radius=5pt] (2,0) circle [radius=5pt];

\end{tikzpicture}
\end{center}
\caption{The situation in Lemma~\ref{l-claim4} where dotted lines indicate non-edges. Note that not all edges incident with $u_1$ and $u_2$ are drawn.}
\label{f-claim4}
\end{figure}
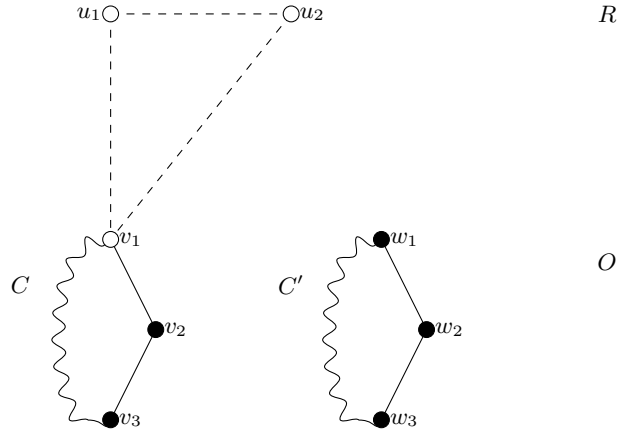

\begin{lemma}\label{l-claim2}
Let $G=(V,E)$ be a $(3P_1+P_2)$-free graph
and let $T\subseteq V$.
For every mixed solution $S_T$, every independent set in $G[R]$ has size at most~$4$.
\end{lemma}

\begin{proof}
Suppose that $R$ contains an independent set $I=\{u_1,\ldots,u_5\}$ of five vertices. As $S_T$ is mixed, $O$ is nonempty. Hence, $G[B_T]$ has an odd cycle $C$.  Let $v_1$, $v_2$, $v_3$ be consecutive vertices of $C$ in that order. As $G$ is $(3P_1+P_2)$-free, $v_1v_2\in E$ and $\{u_1,u_2,u_3\}$ is independent, one of $v_1$, $v_2$ is adjacent to one of $u_1,u_2,u_3$, say $v_1$ is adjacent to $u_1$.  Then $v_1$ must be adjacent to at least two vertices of $\{u_2,u_3,u_4,u_5\}$, otherwise three non-neighbours of $v_1$ in $\{u_2,u_3,u_4,u_5\}$, together with the edge $u_1v_1$, would induce a $3P_1+P_2$. Hence, we may assume without loss of generality that $v_1$ is adjacent to $u_2$ and $u_3$.

Let $i\in \{1,2,3\}$. As $u_i$ is adjacent to $v_1$ and $C$ is odd, $u_i$ cannot be adjacent to $v_2$ or $v_3$, otherwise $u_i$ would belong to an odd cycle in $G[B_T]$, so $u_i$ would not be even, contradicting that $u_i \in R$. Hence, $\{u_1,u_2,u_3,v_2,v_3\}$ induces a $3P_1+P_2$, a contradiction. \qed
\end{proof}

\noindent
We will now look into the ways $O$ and $R$ are connected to each other.
We say that a vertex in $O$ is a {\it connector} if it has a neighbour in $R$. 
Here is our first structural lemma on connectors.

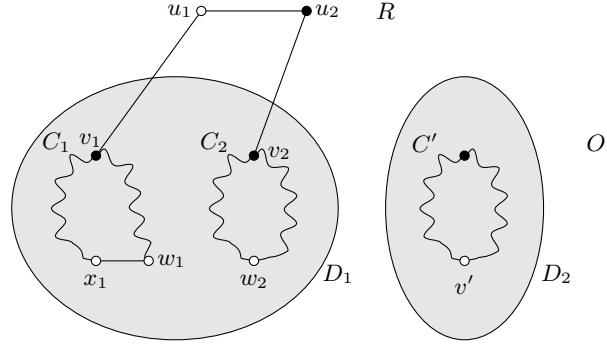
\begin{figure}
\begin{center}
\begin{tikzpicture}[scale=0.35]
\node at (9,4) {$R$};
\node at (17,-1) {$O$};
\node at (7.2,-6) {$D_1$};
\node at (15.5,-6) {$D_2$};
\draw[color=black, fill=gray!20] (1,-3.5) ellipse (6.2cm and 5cm);
\draw[color=black, fill=gray!20] (12,-3.5) ellipse (3cm and 5cm);
\filldraw [black]  (-2,-1.5) circle [radius=5pt] (6,4)  circle [radius=5pt];
\filldraw [black] (4,-1.5) circle [radius=5pt]  (12,-1.5) circle [radius=5pt];
\draw (2,4)--(6,4);
\draw (2,4)--(-2,-1.5) (4,-1.5)--(6,4);
\node[left] at (2,4) {$u_1$};
\node[right] at (6,4) {$u_2$};
\node at (2.5,-1) {$C_2$};
\node at (-3.5,-1) {$C_1$};
\node at (10.5,-1) {$C'$};
\draw  (0,-5.5)--(-2,-5.5);
\node[above, xshift=-2pt] at (-2,-1.5) {$v_1$};
\node[right, xshift=2pt, yshift=1pt] at (4,-1.5) {$v_2$};
\node[right] at (0,-5.5) {$w_1$};
\node[below, yshift=-2pt] at (-2,-5.5) {$x_1$};
\node[below, yshift=-2pt] at (4,-5.5) {$w_2$};
\node[below, yshift=-2pt] at (12,-5.5) {$v'$};
  \draw[decorate,decoration=snake] (-2,-1.5) to [out=180, in =180](-2,-5.5);
  \draw[decorate,decoration=snake] (-2,-1.5) to [out=0, in = 135](0,-5.5);
  \draw[decorate,decoration=snake] (4,-1.5) to [out=180, in = 180](4,-5.5);
  \draw[decorate,decoration=snake] (4,-1.5) to [out=0, in = 0](4,-5.5);
\draw[decorate,decoration=snake] (12,-1.5) to [out=180, in = 180](12,-5.5);
  \draw[decorate,decoration=snake] (12,-1.5) to [out=0, in = 0](12,-5.5);
\draw [black, fill=white]  (2,4) circle [radius=5pt] (0,-5.5) circle [radius=5pt] (-2,-5.5) circle [radius=5pt] (4,-5.5) circle [radius=5pt] (12,-5.5) circle [radius=5pt];

\end{tikzpicture}
\end{center}
\caption{An illustration for the proof of Lemma~\ref{l-claim5}: the white vertices induce a $3P_1+P_2$.} 
\label{f-claim5}
\end{figure}

\begin{lemma}\label{l-claim5}
Let $G=(V,E)$ be a $(3P_1+P_2)$-free graph,  
and let $T\subseteq V$.
For every mixed solution $S_T$, if $G[O]$ has two connected components $D_1$ and $D_2$, then $D_1$ and $D_2$ each have at most one connector.
\end{lemma}

\begin{proof}
By Lemma~\ref{l-claim4}, $R$ is a clique of size at most~$2$.
For contradiction, suppose that, say, $D_1$ has two distinct connectors $v_1$ and $v_2$. Then $v_1$ and $v_2$ each have at most one neighbour in~$R$, else the vertices of $R$ would be in an odd cycle in $G[B_T]$, as $R$ is a clique. Let $u_1$ be the neighbour of $v_1$ in $R$, and let $u_2$ be the neighbour of $v_2$ in $R$; note that $u_1=u_2$ is possible.
 
An edge on a path $P$ from $v_1$ to $v_2$ in $D_1$ does not belong to an odd cycle in $G[D_1]$; else there would be a path $P'$ from $v_1$ to $v_2$ in $G[O]$ with a different parity than $P$ and one of the cycles $u_1v_1Pv_2u_2u_1$ or $u_1v_1P'v_2u_2u_1$ is odd, implying that $u_1$ and $u_2$ would not be even.
 
By definition, $v_1$ and $v_2$  belong to at least one odd cycle, which we denote by $C_1$ and $C_2$, respectively.
Then $V(C_1)\cap V(C_2)=\emptyset$ and there is no edge between a vertex of $C_1$ and a vertex of $C_2$ except from possibly the edge $v_1v_2$; else there would be a path from $v_1$ to $v_2$ in $G[O]$ with an edge that belongs to an odd cycle ($C_1$ or $C_2$), a contradiction with what we found above. Note also that $u_1$ has no neighbours in $V(C_1)$ other than $v_1$; otherwise $G[B_T]$ would have an odd cycle containing $u_1$. Moreover, $u_1$ has no neighbours in $V(C_2)$ either, except $v_2$ if $u_1=u_2$; otherwise $G[B_T]$ would contain an odd cycle containing $u_1$ and $u_2$.
 
We now let $w_1$ and $x_1$ be two adjacent vertices of $C_1$ that are not adjacent to $u_1$. Let $w_2$ be a vertex of $C_2$ not adjacent to $u_1$.
Then, we found that $\{u_1,w_2,w_1,x_1\}$ induces a $2P_1+P_2$ (see Figure~\ref{f-claim5}).
 
We continue by considering $D_2$, the other connected component of $G[O]$. By definition, $D_2$ has an odd cycle~$C'$. As $|R|
\leq 2$ and each vertex of $R$ can have at most one neighbour on an odd cycle in $G[B_T]$, we find that $C'$
contains a vertex~$v'$ not adjacent to any vertex of $R$, so $v'$ is not adjacent to $u_1$. As $v'$ and the vertices of $\{w_2,w_1,x_1\}$ belong to different connected components of $G[O]$, we find that $v'$ is not adjacent to any vertex of $\{w_2,w_1,x_1\}$ either.
However, now $\{u_1,v',w_2,w_1,x_1\}$ induces a $3P_1+P_2$ (see also Figure~\ref{f-claim5}), a contradiction. \qed
\end{proof}

\noindent
We need one more structural lemma (Lemma~\ref{l-claim3}) about connectors, in the case where $G[O]$ is connected. 
In order to be able to make use of this lemma we need to exclude a special kind of mixed solution. 
Let $R$ consist of two adjacent vertices $u_1$ and $u_2$. Let $O$ (with $O\cap T=\emptyset$) be the disjoint union of two complete graphs $K$ and $L$, each on an odd number of vertices that is at least $3$, plus a single additional edge, 
such that:

\begin{enumerate}
\item $u_1$ is adjacent to exactly one vertex $v_1$ in $K$ and to no vertex of $L$;
\item $u_2$ is adjacent to exactly one vertex $v_2$ in $L$ and to no vertex of $K$; and
\item $v_1$ and $v_2$ are adjacent.
\end{enumerate}

\noindent
Note that $G[B_T]$ is indeed $T$-bipartite. We call the corresponding mixed solution $S_T$ a {\it 2-clique solution} (see Figure~\ref{f-2clique}).

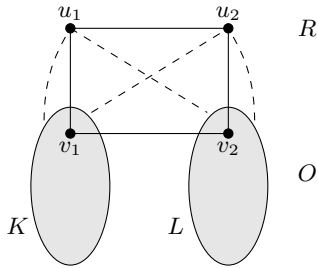
\begin{figure}
\begin{center}
\begin{tikzpicture}[scale=0.35]
\node at (11,4) {$R$};
\node at (11,-1.5) {$O$};
\draw[color=black, fill=gray!20] (2,-2) ellipse (1.5cm and 3cm);
\draw[color=black, fill=gray!20] (8,-2) ellipse (1.5cm and 3cm);
\filldraw [black] (2,4) circle [radius=5pt] (8,4) circle [radius=5pt] (2,0) circle [radius=5pt] (8,0)  circle [radius=5pt];
\draw (2,4)--(8,4)--(8,0)--(2,0)--(2,4);
\draw[dashed] (2,4)--(7.2,0.8) (2.8,0.8)--(8,4);
\draw[dashed] (1.0,0.5) to [bend left=15] (2,4);
\draw[dashed] (9.0,0.5) to [bend right=15] (8,4);
\node[above] at (2,4) {$u_1$};
\node[above] at (8,4) {$u_2$};
\node[below] at (2,0) {$v_1$};
\node[below] at (8,0) {$v_2$};
\node at (0,-3.5) {$K$};
\node at (6,-3.5) {$L$};
\end{tikzpicture}
\end{center}
\caption{The structure of $B_T$ corresponding to a 2-clique solution $S_T$. The subgraphs $K$ and $L$ are each cliques on an odd number of vertices that is at least~$3$.}
\label{f-2clique}
\end{figure}

\begin{lemma}\label{l-claim3}
Let $G=(V,E)$ be a $(3P_1+P_2)$-free graph 
and let $T\subseteq V$.
For every mixed solution $S_T$ that is not a $2$-clique solution, if $G[O]$ is connected, then $O$ has no two connectors with a neighbour in the same connected component of~$G[R]$.
 \end{lemma}

\begin{proof}
For some $p\geq 1$,  let $F_1,\ldots,F_p$ be the set of components of $G[R]$.
 For contradiction, assume $O$ has two distinct connectors $v_1$ and $v_2$, each with a neighbour in the same $F_i$, say, $F_1$. Let $u_1,u_2\in V(F_1)$ be these two neighbours, where $u_1=u_2$ is possible. Let $Q$ be a path from 
$u_1$ to $u_2$
 in $F_1$ (see Figure~\ref{f-PQ}). We make an important claim:
{\it All paths from $v_1$ to $v_2$ in $G[O]$ have the same parity.}
The reason is that if there exist paths $P$ and $P'$ from $v_1$ to $v_2$ in $G[O]$ that have different parity, then either the cycle $u_1v_1Pv_2u_2Qu_1$ or the cycle $u_1v_1P'v_2u_2Qu_1$ is odd. This would mean that $u_1$ and $u_2$ are not even. 

By definition, $v_1$ and $v_2$ each belong to at least one odd cycle, which we denote by $C_1$ and $C_2$, respectively. 
We choose $C_1$ and $C_2$ such that they have minimum length.
We note that $V(C_1)\cap V(C_2)=\emptyset$ and that there is no edge between a vertex of $C_1$ and a vertex of $C_2$ except possibly the edge $v_1v_2$; otherwise there would be paths from $v_1$ to $v_2$ in $G[O]$ that have different parity, a contradiction with the claim  above.

We also note that $v_1$ is the only neighbour of $u_1$ on $C_1$; otherwise $u_1$ would belong to an odd cycle of $G[B_T]$. 
Similarly, $v_2$ is the only neighbour of $u_2$ on $C_2$. Moreover, $u_1$ has no neighbour on $C_2$ except $v_2$ if $u_1=u_2$, and $u_2$ has no neighbour on $C_1$ except $v_1$ if $u_1=u_2$. 
This can be seen as follows. 
For a contradiction, first suppose that, say, $u_1$ has a neighbour~$w$ on $C_2$ and $w\neq v_2$. As $C_2$ is an odd cycle, there exist two vertex-disjoint paths $P$ and $P'$ on $C_2$ from $w$ to $v_2$ of different parity. Using the edges $u_1w$ and $u_2v_2$ and the path $Q$ from $u_1$ to $u_2$, this means that $u_1$ and $u_2$ are on odd cycle of $G[B_T]$. However, this is not possible as $u_1$ and $u_2$ are even. Hence, $u_1$ has no neighbour on $V(C_2)\setminus \{v_2\}$. By the same reasoning, $u_2$ has no neighbour on $V(C_1)\setminus \{v_1\}$.
Now suppose that $u_1$ is adjacent to $v_2$ and that $u_1\neq u_2$. Then $u_1$ is not adjacent to $u_2$, otherwise the vertices $u_1$, $u_2$ and $v_2$ would form a triangle, and consequently, $u_1$ and $u_2$ would not be even. Recall that $V(C_1)\cap V(C_2)=\emptyset$ and that there is no edge between a vertex of $C_1$ and a vertex of $C_2$.
Hence, we can now take $u_1$, $u_2$, a vertex of $V(C_1)\setminus \{v_1\}$, and two adjacent vertices of $V(C_2)\setminus \{v_2\}$ (which exist as $C_2$ is a cycle) to find an induced $3P_1+P_2$, a contradiction. 

We now claim that $C_1$ and $C_2$ each have exactly three vertices.  For contraction, assume that at least one of them, $C_1$ has length at least~$5$ and that in~$C_1$, we have that $x$ and $y$ are the two neighbours of $v_1$. As $C_1$ has minimum length, $x$ and $y$ are not adjacent.
Let $t_1$ and $t_2$ be adjacent vertices of $C_2$ distinct from $v_2$.   Then $\{u_1,x,y,t_1,t_2\}$ induces a $3P_1+P_2$ in $G$, a contradiction. Hence, $C_1$ and $C_2$ are triangles, say with vertices $v_1$, $w_1$, $x_1$ and $v_2$, $w_2$, $x_2$, respectively.

Now suppose $G[O]$ has a path from $v_1$ to $v_2$ on at least three vertices.  Let $s$ be the vertex adjacent to $v_1$ on this path. Then $s\notin \{w_1,x_1,w_2,x_2\}$  and $s$ is not adjacent to any vertex of $\{w_1,x_1,w_2,x_2\}$ either; otherwise 
$G[O]$ contains two paths from $v_1$ to $v_2$ that are of different parity.
As $u_1$ and $s$ are not adjacent (else $u_1$ belongs to a triangle), we find that $\{s,u_1,w_2,w_1,x_1\}$ induces a $3P_1+P_2$, a contradiction (see also Figure~\ref{f-PQ}). We conclude that as $G[O]$ is connected, $v_1$ and $v_2$ must be adjacent.

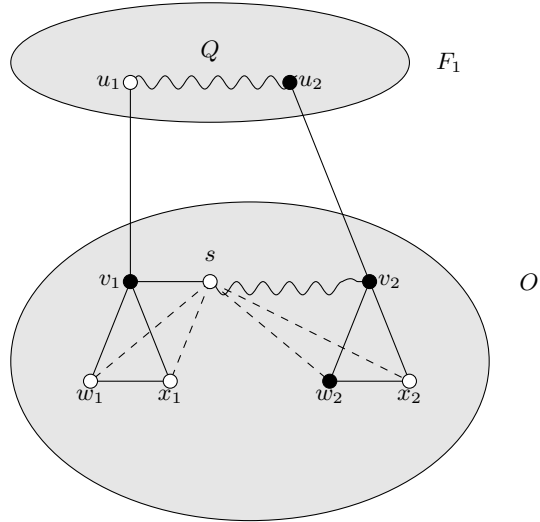
\begin{figure}
\begin{center}
\begin{tikzpicture}[scale=0.53]
\node at (10,4) {$F_1$};
\node at (12,-1.5) {$O$};
\draw[color=black, fill=gray!20] (4,4) ellipse (5cm and 1.5cm);
\draw[color=black, fill=gray!20] (5,-3.5) ellipse (6cm and 4cm);
\filldraw [black] (6,3.5) circle [radius=5pt] (7,-4) circle [radius=5pt] (2,-1.5) circle [radius=5pt] (8,-1.5)  circle [radius=5pt];
\draw (2,3.5)--(2,-1.5) (8,-1.5)--(6,3.5);
\draw (2,-1.5)--(1,-4)--(3,-4)--(2,-1.5);
\draw (8,-1.5)--(7,-4)--(9,-4)--(8,-1.5);
\draw[dashed] (4,-1.5)--(7,-4);
\draw[dashed] (4,-1.5)--(9,-4);
\draw[dashed] (4,-1.5)--(1,-4);
\draw[dashed] (4,-1.5)--(3,-4);
\node[left] at (2,3.5) {$u_1$};
\node[right] at (6,3.5) {$u_2$};
\node[left] at (2,-1.5) {$v_1$};
\node[right] at (8,-1.5) {$v_2$};
\node[above] at (4,3.8) {$Q$};
\node[below] at (1,-4) {$w_1$};
\node[below] at (3,-4) {$x_1$};
\node[below] at (7,-4) {$w_2$};
\node[below] at (9,-4) {$x_2$};
\node[above] at (4,-1.2) {$s$};
\draw  decorate [decoration={snake}] {(2,3.5)--(6,3.5) (4,-1.5)--(8,-1.5)};
\draw (2,-1.5)--(4,-1.5);
\draw [black, fill=white] (2,3.5) circle [radius=5pt]  (1,-4) circle [radius=5pt] (3,-4) circle [radius=5pt]  (9,-4)  circle [radius=5pt];
\draw [black, fill=white] (4,-1.5) circle [radius=5pt];
\end{tikzpicture}
\end{center}
\caption{The white vertices induce a $3P_1+P_2$.}
\label{f-PQ}
\end{figure}

So far, we found that $O$ contains two vertex-disjoint triangles on vertex sets $\{v_1,w_1,x_1\}$ and $\{v_2,w_2,x_2\}$, respectively, with $v_1v_2$ as the only edge between them. As $v_1$ is adjacent to $v_2$, we find that $u_1\neq u_2$; otherwise $\{u_1,v_1,v_2\}$ would induce a triangle, which is not possible as $u_1\in R$. Recall that $u_1$ is not adjacent to any vertex of $V(C_1)\cup V(C_2)$ except $v_1$, and similarly, $u_2$ is not adjacent to any vertex of $V(C_1)\cup V(C_2)$ except $v_2$. Then $u_1$ must be adjacent to $u_2$, as otherwise $\{u_1,u_2,w_1,w_2,x_2\}$ would induce a $3P_1+P_2$.

Let $z\in O\setminus (V(C_1)\cup V(C_2))$. Suppose $u_1$ is adjacent to $z$. First assume $z$ is adjacent to $w_1$ or $x_1$, say $w_1$. Then $u_1zw_1x_1v_1u_1$ is an odd cycle. Hence, this is not possible. Now assume $z$ is adjacent to $w_2$ or $x_2$, say $w_2$. Then $u_1zw_2v_2u_2u_1$ is an odd cycle. This is not possible either. 
Hence, $z$ is not adjacent to any vertex of $\{w_1,x_1,w_2,x_2\}$. Moreover, $z$ is not adjacent to $u_2$, as otherwise $\{u_1, u_2, z\}$ induces a triangle in $G[B_T]$. However, $\{u_2,w_2,z,w_1,x_1\}$ now induces a $3P_1+P_2$. Hence, $u_1$ is not adjacent to $z$. In other words, $v_1$ is the only neighbour of $u_1$ on $O$.  By the same arguments, $v_2$ is the only neighbour of $u_2$ on $O$.

Let $K$ be a maximal clique of $O$ that contains $C_1$ and let $L$ be a maximal clique of $O$ that contains~$C_2$. Note that 
$K$ and $L$ are vertex-disjoint, as for example, $w_1\in K$ and $w_2\in L$ are not adjacent.
We claim that $O=K\cup L$. For contradiction, assume that $r$ is a vertex of $O$ that does not belong to $K$ or $L$. 
As $u_1$ and $u_2$ are adjacent vertices that have no neighbours in $O\setminus \{v_1,v_2\}$, the $(3P_1+P_2)$-freeness of $G$ implies that $G[O\setminus \{v_1,v_2\}]$ is $3P_1$-free. 
As $K\setminus \{v_1\}$ and $L\setminus \{v_2\}$ induce the disjoint union of two complete graphs on at least two vertices, this means that $r$ is adjacent to every vertex of $K\setminus \{v_1\}$ or to every vertex of $L\setminus \{v_2\}$, say $r$ is adjacent to every vertex of $K\setminus \{v_1\}$. Then $r$ has no neighbour $r'$ in $L\setminus \{v_2\}$, as otherwise the cycle $v_1u_1u_2v_2r'rw_1v_1$ is an odd cycle in $G[B_T]$ that contains $u_1$ (and $u_2$).
Moreover, as $K$ is maximal and $r$ is adjacent to every vertex of $K\setminus \{v_1\}$, we find that $r$ and $v_1$ are not adjacent.
Recall also that $u_2$ has $v_2$ as its only neighbour in $O$, hence $u_2$ is not adjacent to~$r$. This means that
$\{r,v_1,u_2,w_2,x_2\}$ induces a $3P_1+P_2$, which is not possible. We conclude that $O=K\cup L$; consequently, both $K$ and $L$ have odd size.

We now consider the graph $F_1$ in more detail. Suppose $F_1$ contains another vertex $u_3\notin \{u_1,u_2\}$. As $F_1$ is connected and bipartite (as $V(F_1)\subseteq R$), we may assume without loss of generality that $u_3$ is adjacent to $u_1$ but not to $u_2$. If $u_3$ has a neighbour $K$, then $G[B_T]$ contains an odd cycle that uses $u_1$, $u_3$ and one vertex of $K$ 
 (if the neighbour of $u_3$ in $K$ is $v_1$) or three vertices of $K$ (if the neighbour of $u_3$ in $K$ is not $v_1$).
 Hence, $u_3$ has no neighbour in $K$. This means that $\{u_2,u_3,w_2,w_1,x_1\}$ induces a $3P_1+P_2$, so $u_3$ cannot exist. Hence, $F_1$ consists only of the two adjacent vertices $u_1$ and $u_2$.

Now suppose that $p\geq 2$, that is, $F_2$ is nonempty. Let $u'\in V(F_2)$. As $u'\in R$, we find that $u'$ is adjacent to at most one vertex of $C_1$ and to at most one vertex of $C_2$. Hence, we may without loss of generality assume that $u'$ is not adjacent to $w_1$ and $w_2$. Then $\{u',w_1,w_2,u_1,u_2\}$ induce a $3P_1+P_2$. We conclude that $R=\{u_1,u_2\}$. However, now $S_T$ is a 2-clique solution of $G$, a contradiction. \qed
\end{proof}

\noindent
{\bf An algorithmic lemma.} As part of our algorithm we need to be able to find a 2-clique solution of minimum weight in polynomial time.
This is shown in the next lemma.

\begin{lemma}\label{l-2clique} 
Let $G=(V,E)$ be a $(3P_1+P_2)$-free graph with a vertex weighting~$w$, and let $T\subseteq V$. It is possible to find in polynomial time a 2-clique solution for $(G,w,T)$ that has minimum weight.
\end{lemma}

\begin{proof}
As the cliques $K$ and $L$ in $B_T$ have size at least~$3$ for a 2-clique solution~$S_T$, there are distinct vertices $x_1,y_1$ in $K\setminus \{v_1\}$ and distinct vertices $x_2,y_2$ in $L\setminus \{v_2\}$.
The ordered $8$-tuple $(u_1,u_2,v_1,v_2,x_1,y_1,x_2,y_2)$ is a {\it skeleton} of the 2-clique solution.
We call the labelled subgraph of $B_T$ that these vertices induce a \emph{skeleton graph}. 
(see Figure~\ref{f-skeleton}).

\begin{figure}
\begin{center}
\begin{tikzpicture}[scale=0.45]
\filldraw [black] (2,4) circle [radius=5pt] (6,4) circle [radius=5pt] (2,0) circle [radius=5pt] (6,0)  circle [radius=5pt];
\draw (2,4)--(6,4)--(6,0)--(2,0)--(2,4);
\filldraw [black] (1,-3) circle [radius=5pt] (3,-3) circle [radius=5pt] (5,-3) circle [radius=5pt] (7,-3)  circle [radius=5pt];
\draw (2,0)--(1,-3)--(3,-3)--(2,0);
\draw (6,0)--(5,-3)--(7,-3)--(6,0);
\node[left] at (2,4) {$u_1$};
\node[right] at (6,4) {$u_2$};
\node[left] at (2,0) {$v_1$};
\node[right] at (6,0) {$v_2$};
\node[left] at (1,-3) {$x_1$};
\node[below] at (3,-3) {$y_1$};
\node[below] at (5,-3) {$x_2$};
\node[right] at (7,-3) {$y_2$};
\end{tikzpicture}
\end{center}
\caption{A skeleton graph.}
\label{f-skeleton}
\end{figure}
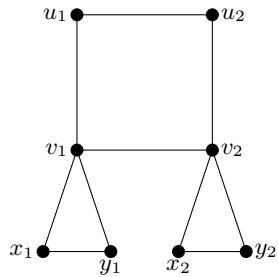

In order to find a $2$-clique solution of minimum weight in polynomial time, we consider all ${\mathcal O}(n^8)$ possible ordered $8$-tuples $(u_1,u_2,v_1,v_2,x_1,y_1,x_2,y_2)$ of vertices of $G$ and further investigate those that induce a skeleton graph.  We note that if these vertices form the skeleton of a 2-clique solution $S_T$, then $R(B_T)=\{u_1, u_2\}$ and $O(B_T)$ is a subset of $$V'=\{v_1,x_1,y_1\}\cup \{v_2,x_2,y_2\}\cup (N(v_1)\cap N(x_1)\cap N(y_1))\cup (N(v_2)\cap N(x_2)\cap N(y_2)).$$  We further refine the definition of $V'$ by deleting any vertex that cannot, by definition, belong to $O(B_T)$; that is, we remove every vertex that belongs to $T\cup (N(\{u_1,u_2\})\setminus \{v_1,v_2\})$  
or is a neighbour of both a vertex in $\{v_1,x_1,y_1\}$ and a vertex in $\{v_2,x_2,y_2\}$.
We write $G'=G[V']$. Note that $u_1$ and $u_2$ are not in $G'$
(as they are not adjacent to any vertex in $\{x_1,x_2,y_1,y_2\}$),
whereas $v_1,v_2,x_1,x_2,y_1,y_2$ all are in~$G'$.

Let $K'=\{v_1,x_1,y_1\}\cup (N(\{v_1,x_1,y_1\})\cap V')$ and $L'= \{v_2,x_2,y_2\}\cup (N(\{v_2,x_2,y_2\})\cap V')$.
We now show that 
\begin{itemize}
\item [(i)] $K'$ and $L'$ partition $V'$, and
\item [(ii)] $K'$ and $L'$ are cliques. 
\end{itemize}
By definition, every vertex of $V'$ either belongs to $K'$ or to $L'$. 
By construction, $K'\cap L'=\emptyset$ since every vertex in $K' \setminus \{v_1\}$ is a neighbour of $v_1$ and every vertex in $L' \setminus \{v_2\}$ is a neighbour of $v_2$ and no vertex in $V'$ is adjacent to both $v_1$ and $v_2$ which are themselves distinct. This shows (i).

We now prove (ii).
For a contradiction, suppose $K'$ is not a clique. Then $K'$ contains two non-adjacent vertices $t$ and~$t'$. As $K'\setminus \{v_1,x_1,y_1\}$ is complete to the clique $\{v_1,x_1,y_1\}$, we find that $t$ and $t'$ both belong to $K'\setminus \{v_1,x_1,y_1\}$.
By construction of $G'$, we find that $\{t,t'\}$ is anti-complete to $\{u_1,u_2,x_2\}$. By the definition of a skeleton, $\{u_1,u_2\}$ is anti-complete to $\{x_2\}$. Then $\{u_1,u_2,t,t',x_2\}$ induces a $3P_1+P_2$ in $G$, a contradiction. By the same arguments, $L'$ is a clique.

\medskip
\noindent
We will now continue as follows.
In $G'$ we first delete the edge $v_1v_2$.
Second, for $i\in \{1,2\}$ we replace the vertices $v_i$, $x_i$, $y_i$ by a new vertex $v_i^*$ that is adjacent precisely to every vertex that is a neighbour of at least one vertex of $\{v_i,x_i,y_i\}$ in $G'$.
This transforms the graph~$G'$ into the graph $G^*=(V^*,E^*)$. Note that in $G^*$ there is no edge between $v_1^*$ and~$v_2^*$. We give each vertex $z\in V^*\setminus \{v_1^*,v_2^*\}$ weight $w^*(z)=w(z)$, and for $i\in \{1,2\}$, we set $w^*(v_i^*)=w(v_i)+w(x_i)+w(y_i)$.
See Figure~\ref{f-2clique2}.

The algorithm will now solve {\sc Weighted Vertex Cut} on $(G^*,w^*)$ with terminals $v_1^*$ and $v_2^*$; recall that this can be done in polynomial time by standard network flow techniques. Let $S^*$ be the output. Then $G^*-S^*$ has two distinct connected components on vertex sets $K^*$ and $L^*$, respectively, with $v_1^*\in K^*$ and $v_2^*\in L^*$. We set $K=(K^*\setminus \{v_1^*\}) \cup \{v_1,x_1,y_1\}$ and $L=(L^*\setminus \{v_2^*\}) \cup \{v_2,x_2,y_2\}$ and note that $G'-S^*$ contains $G[K]$ and $G[L]$ as distinct connected components.

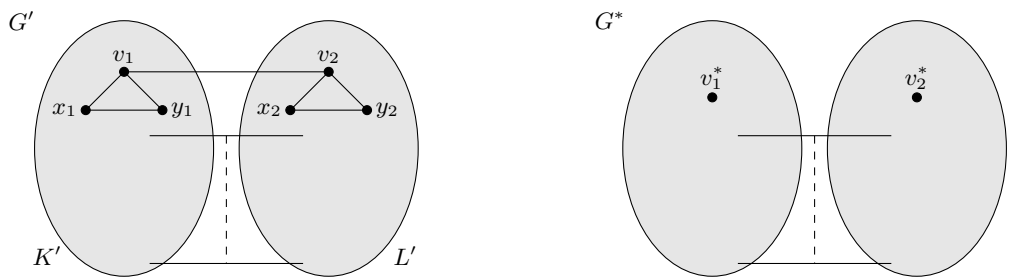
\begin{figure}
\begin{center}
\begin{tikzpicture}[scale=0.34]
\node at (-4,2) {$G'$};
\draw[color=black, fill=gray!20] (0,-3) ellipse (3.5cm and 5cm);
\draw[color=black, fill=gray!20] (8,-3) ellipse (3.5cm and 5cm);
\filldraw [black]  (0,0) circle [radius=5pt] (8,0)  circle [radius=5pt] (-1.5,-1.5) circle [radius=5pt] (1.5,-1.5)  circle [radius=5pt] (6.5,-1.5) circle [radius=5pt] (9.5,-1.5)  circle [radius=5pt];
\draw (8,0)--(0,0);
\draw (0,0)--(1.5,-1.5)--(-1.5,-1.5)--(0,0);
\draw (8,0)--(9.5,-1.5)--(6.5,-1.5)--(8,0);

\draw (1,-2.5)--(7,-2.5);
\draw[dashed] (4,-2.5)--(4,-7.5);
\draw (1,-7.5)--(7,-7.5);
\node[above] at (0,0) {$v_1$};
\node[above] at (8,0) {$v_2$};
\node[left] at (-1.5,-1.5) {$x_1$};
\node[right] at (1.5,-1.5) {$y_1$};
\node[left] at (6.5,-1.5) {$x_2$};
\node[right] at (9.5,-1.5) {$y_2$};
\node at (-3,-7.2) {$K'$};
\node at (11,-7.2) {$L'$};

\begin{scope}[xshift=23cm]
\node at (-4,2) {$G^*$};
\draw[color=black, fill=gray!20] (0,-3) ellipse (3.5cm and 5cm);
\draw[color=black, fill=gray!20] (8,-3) ellipse (3.5cm and 5cm);
\filldraw [black]  (0,-1) circle [radius=5pt] (8,-1)  circle [radius=5pt];
\draw (1,-2.5)--(7,-2.5);
\draw[dashed] (4,-2.5)--(4,-7.5);
\draw (1,-7.5)--(7,-7.5);

\node[above] at (0,-1) {$v_1^*$};
\node[above] at (8,-1) {$v_2^*$};

\end{scope}

\end{tikzpicture}
\end{center}
\caption{The graph $G'$ and $G^*$ in the proof of Lemma~\ref{l-2clique}.} 
\label{f-2clique2}
\end{figure}

As $K$ is a subset of the clique $K'$ and $L$ is a subset of the clique $L'$ and $V'=K'\cup L'$, we find that $G[K]$ and $G[L]$ are the only two connected components of $G'-S'$, and moreover that $K$ and $L$ are cliques.
As no vertex of $(K\cup L)\setminus \{v_1,v_2\}$ is adjacent to $u_1$ or~$u_2$, this means that $S=V\setminus (\{u_1,u_2\}\cup K\cup L)$ is a 2-clique solution for $G$. Moreover, as $S^*$ is an optimal solution of {\sc Weighted Vertex Cut} on instance $(G^*,w^*)$ with terminals $v_1^*$ and $v_2^*$, we find that $S$ has minimum weight over all 2-clique solutions with skeleton $(u_1,u_2,v_1,v_2,x_1,y_1,x_2,y_2)$. 

From all the ${\mathcal O}(n^8)$ 2-clique solutions computed in this way, we pick one with minimum weight; 
note that 
we found this 2-clique solution in polynomial time. \qed
\end{proof}

\noindent
{\bf The Algorithm.} We are now ready to prove the main result of the section.\\[-15pt]
\begin{theorem}\label{t-3p1p2}
{\sc Weighted Subset Odd Cycle Transversal} is polynomial-time solvable for $(3P_1+P_2)$-free graphs.
\end{theorem}

\begin{proof}
Let $G$ be a $(3P_1+P_2)$-free graph with a vertex weighting~$w$, and let $T\subseteq V(G)$.
We describe a polynomial-time algorithm for the optimization version of the problem on input $(G,T,w)$ using the approach of Section~\ref{s-general}.
So, in Step~1, we compute a neutral solution of minimum weight, i.e., a minimum weight odd cycle transversal, using
polynomial time due to Lemma~\ref{l-sp2} (take $s=4$).
We then compute, in Step~2, a $T$-full solution by setting $S_T=T$.
It remains to compute a mixed solution $S_T$ of minimum weight (Step~3) and compare its weight with the two solutions found above (Step~4).
By Lemma~\ref{l-claim1} we can distinguish between two cases: $G[O]$ is connected or $G[O]$ consists of two connected components. 
We compute a mixed solution of minimum weight for each type.

\medskip
\noindent
{\bf Case 1.} $G[O]$ is connected.\\
We first compute in polynomial time a 2-clique solution of minimum weight by using Lemma~\ref{l-2clique}.
In the remainder of Case~1, we will compute a mixed solution $S_T$ of minimum weight with connected $G[O]$ that is not a 2-clique solution.

By Lemma~\ref{l-claim2} and the fact that $G[R]$ is bipartite by definition, we find that $|R|\leq 8$. We consider all ${\mathcal O}(n^8)$ possibilities for $R$. We discard a choice for $R$ if $G[R]$ is not bipartite.
If $G[R]$ is bipartite, we compute a solution $S_T$ of minimum weight such that $B_T$ contains $R$.
Let $F_1,\ldots,F_p$ be the components of $G[R]$. By definition, $p\geq 1$. By Lemma~\ref{l-claim2} $p\leq 4$. By Lemma~\ref{l-claim3}, $O$ has at most $p\leq 4$ connectors. 

We now consider all ${\mathcal O}(n^4)$ possible choices for a set $D$ of at most four connectors. For each set $D$, we first check that $G[D\cup R]$ is $T$-bipartite and that there are no two vertices in $D$ with a neighbour in the same $F_i$; if one of these conditions is not satisfied, we discard our choice of $D$. If both conditions are satisfied we put the vertices of $D$ in  $O$, together with any vertex that is not in $T$ and that is not adjacent to any vertex of $R$. Then, as $G[D\cup R]$ is $T$-bipartite and no two vertices in $D$ are adjacent to the same component $F_i$, the graph $G[R\cup O]$ is $T$-bipartite. We remember the weight of $S_T=V\setminus (R\cup O)$.

In doing the above, we may have computed a set~$O$ that is disconnected or that contains even vertices. So we might compute some solutions more than once. However, we can compute each solution in polynomial time, and the total number of solutions we compute in Case~1 is 
${\mathcal O}(n^8)\cdot {\mathcal O}(n^4)={\mathcal O}(n^{12})$,
which is polynomial as well. Out of all the 2-clique solutions and other mixed solutions we found, we pick a solution $S_T=V_T\setminus (R\cup O)$ with minimum weight as the output for Case~1.

\medskip
\noindent
{\bf Case 2.} $G[O]$ consists of two connected components $D_1$ and $D_2$.\\
By Lemma~\ref{l-claim4}, $R$ is a clique of size at most~$2$.
We consider all possible ${\mathcal O}(n^2)$ options for $R$. Each time $R$ is a clique, we proceed as follows.
By Lemma~\ref{l-claim5}, both $D_1$ and $D_2$ have at most one connector.
We consider all ${\mathcal O}(n^2)$  ways of choosing at most one connector from each of them. If we choose two, they must be non-adjacent.
We discard the choice if the subgraph of~$G$ induced by $R$ and the chosen connector(s) is not $T$-bipartite. Otherwise we continue.
If we chose at most one connector $v$, we let $O$ consist of~$v$ and all vertices that do not belong to~$T$ and that do not have a neighbour in~$R$. Then $G[R\cup O]$ is $T$-bipartite and we store $S_T=V\setminus (R\cup O)$. Note that $O$ might not induce two connected components consisting of odd vertices, so we may duplicate some work. However, $R\cup O$ induces a $T$-bipartite graph and we found $O$ in polynomial time, and this is what is relevant (together with the fact that we only use polynomial time).

In the case where the algorithm chooses two (non-adjacent) connectors $v$ and~$v'$ we proceed as follows. We remove any vertex from $T$ and any neighbour of $R$ other than $v$ and~$v'$. 
Let $(G',w')$ be the resulting weighted graph 
(where $w'$ is the restriction of $w$ to $V(G')$).
We then 
solve {\sc Weighted Vertex Cut} in polynomial time on $G'$, $w'$ and with $v$ and $v'$ as terminals. Let $S$ be the output. We let $O=V(G')-S$. 
Note that $G[O]$ consists of two connected components (together with the fact that $G[R\cup \{v,v'\}]$ is $T$-bipartite, this implies that $G[R\cup O]$ is $T$-bipartite) but $G[O]$ might contain even vertices. However, what is relevant is that $G[R\cup O]$ is $T$-bipartite, and that we found $O$ in polynomial time. We remember
the solution $S_T=V\setminus (R\cup O)$.
In the end we remember from all the solutions we computed one with minimum weight as the output for Case~2. Note that 
the number of solutions is ${\mathcal O}(n^2)\cdot {\mathcal O}(n^2)={\mathcal O}(n^4)$ and we found each solution in polynomial time. Hence, processing Case~2 takes polynomial time.

\medskip
\noindent
{\bf Correctness and Running Time.}
The correctness of our algorithm follows from the correctness of Cases~1 and~2, which describe all possible mixed solutions due to Lemma~\ref{l-claim1}.
As processing Cases~1 and~2 takes polynomial time, we compute a mixed solution of minimum weight in polynomial time. 
Computing a non-mixed solution of minimum weight takes polynomial time as deduced already. Hence, the running time is polynomial. \qed
\end{proof}

\section{Weighted Subset Odd Cycle Transversal on $(P_1+P_3)$-free graphs}\label{p4+p1p3}

In this section, we will prove that {\sc Weighted Subset Odd Cycle Transversal} can be solved in polynomial time for $(P_1+P_3)$-free graphs.
We will follow the framework of Section~\ref{s-general} but in a less strict sense.  First we require some further definitions and preliminary results.

A subgraph $H$ of $G$ is a \emph{co-component} of $G$ if $H$ is a connected component of $\overline{G}$.
The \emph{closed neighbourhood} of $u$ is $N_G(u)\cup \{u\}$, which we denote by $N_G[u]$.
We omit subscripts when there is no ambiguity.
 We say that a set $X \subseteq V(G)$ \emph{meets} a subgraph $H$ of $G$ if $X \cap V(H) \neq \emptyset$.

In order to prove the result, on occasion we reduce to solving a weighted subset variant of the well-known {\sc Independent Set} problem and use the lemma below (Lemma~\ref{l-wsis}).
  We say that $I_T \subseteq V(G)$ is a \emph{$T$-independent set} of $G$ if each vertex of $I_T \cap T$ is an isolated vertex in $G[I_T]$.
  Note that $I_T$ is a $T$-independent set if and only if $V(G) \setminus I_T$ is a $T$-vertex cover.

\problemdef{{\sc Weighted Subset Independent Set}}{a graph $G$, a subset $T\subseteq V(G)$, a non-negative vertex weighting~$w$ and an integer $k\geq 1$.}{does $G$ have a $T$-independent set $I_T$ with $w(I_T) \geq k$?} 

\begin{lemma}\label{l-wsis}
  {\sc Weighted Subset Independent Set} is polynomial-time solvable for $3P_1$-free graphs.
\end{lemma}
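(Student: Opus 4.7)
The plan is to exploit the fact that a $3P_1$-free graph has independence number at most~$2$. I will work with the optimisation version: find a $T$-independent set of maximum weight. Given any $T$-independent set $I_T$ of $G$, I partition it as $A = I_T \cap T$ and $B = I_T \setminus T$. The definition of $T$-independent set forces every vertex of $A$ to be isolated in $G[I_T]$, so $A$ must be an independent set of $G$ with no neighbour in $B$. Since $G$ is $3P_1$-free, this immediately gives $|A| \le 2$.

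Conversely, for any independent set $A \subseteq T$ with $|A| \le 2$ and any $B \subseteq V(G) \setminus (T \cup N(A))$, the union $A \cup B$ is a $T$-independent set: each vertex of $A$ has no neighbour in $A \cup B$ and hence is isolated in $G[A \cup B]$. Crucially, the definition of $T$-independent set places \emph{no} constraint on edges within $B = I_T \setminus T$. Since $w$ is non-negative, the optimal choice of $B$ given a fixed $A$ is therefore simply $B_A := V(G) \setminus (T \cup N(A))$, and the corresponding $T$-independent set has weight $w(A) + w(B_A)$.

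The algorithm will enumerate all sets $A \subseteq T$ with $|A| \le 2$ that are independent in $G$; there are $O(n^2)$ such sets. For each $A$ it computes $B_A$ and the value $w(A) + w(B_A)$ in polynomial time, and it outputs the maximum value across all $A$ (comparing against the threshold $k$ in the decision version). Correctness follows from the two structural observations above, which between them show that the maximum weight of a $T$-independent set is exactly $\max_A (w(A) + w(B_A))$.

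No substantial obstacle is expected: the work is concentrated in the structural observation that $|A| \le 2$, after which the non-negativity of $w$ trivialises the choice of $B$. The one subtlety worth stressing in the write-up is that $T$-independence restricts only the vertices lying in $I_T \cap T$, so edges inside $B$ are permitted and need not be avoided.
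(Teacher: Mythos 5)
Your proof is correct and follows essentially the same approach as the paper: both enumerate the $O(n^2)$ possibilities for $I_T\cap T$ (bounded by $2$ via $3P_1$-freeness) and then greedily add every admissible vertex of $V(G)\setminus T$, relying on the non-negativity of $w$. Your uniform treatment via $A$ and $B_A$ simply merges the paper's three explicit cases ($|I_T\cap T|=0,1,2$) into one; in particular, when $|A|=2$ your set $B_A$ is automatically empty, recovering the paper's observation that $|I_T|=2$ in that case.
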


\begin{proof}
  Let $G$ be a $3P_1$-free graph, and let $T \subseteq V(G)$.
  Suppose $I_T$ is a $T$-independent set of $G$.
  Observe that $|I_T \cap T| \le 2$: if $I_T$ contained three vertices of $T$, then they would form an independent set of size 3, contradicting that $G$ is $3P_1$-free.
  Moreover, if $|I_T \cap T| = 2$, then $|I_T| = 2$, since $I_T$ is $T$-independent and $G$ is $3P_1$-free.

  Suppose $|I_T \cap T| = 1$.
  We claim that in this case $I_T$ consists of a single vertex $t \in T$ and a clique $C \subseteq V(G) \setminus T$, where $t$ is anti-complete to $C$.
  Let $I_T \cap T = \{t\}$, say.
  Since $I_T$ is $T$-independent, $I_T \setminus \{t\} \subseteq V(G) \setminus N[t]$.  Since $G$ is $3P_1$-free, $V(G) \setminus N[t]$ is a clique, thus proving the claim.
  So there are three cases.\footnote{These cases correspond to the general framework of Section 2; Case~1 is about computing a $T$-full solution,  Case~2 is about computing a mixed solution and Case~3 is about computing a neutral solution.}
  
\medskip
\noindent
\textbf{Case 1:} $I_T \cap T = \emptyset$.

\smallskip
\noindent
\textbf{Case 2:} $|I_T \cap T| = 1$, in which case $I_T = \{t\} \cup C$ where $t \in T$ and $C$ is a clique of $V(G) \setminus T$ such that $t$ is anti-complete to $C$.  Moreover, since $I_T$ is $T$-independent, $C \subseteq V(G) \setminus N[t]$. 

\smallskip
\noindent   
\textbf{Case 3:} $|I_T \cap T| = 2$, in which case $|I_T| = 2$.

\medskip
\noindent
  We compute a collection of ${\mathcal O}(n^2)$ $T$-independent sets, and then output a set of maximum weight.  We compute the collection of $T$-independent sets as follows:

\medskip
\noindent
 \textbf{Case 1:} Set $I_T = V(G) \setminus T$.
 
\smallskip
\noindent
\textbf{Case 2:} For each vertex $t \in T$, let $U = V(G) \setminus N[t]$ and set $I_T = \{t\} \cup (U \setminus T)$.

\smallskip
\noindent
\item \textbf{Case 3:} For every pair of distinct vertices $t_1$ and $t_2$ in $T$, if $t_1$ and $t_2$ are non-adjacent, set $I_T = \{t_1,t_2\}$.

\medskip
\noindent
  By the foregoing, this collection will contain a maximum-weight independent set.
  So among these ${\mathcal O}(n^2)$ $T$-independent sets $I_T$, we output one of maximum weight. \qed
\end{proof}

\noindent
The {\it paw} is the graph obtained from a triangle after adding a new vertex that is adjacent to only one vertex of the triangle.
Alternatively, the paw is the complement of $P_1+P_3$ and is therefore denoted $\overline{P_1+P_3}$.
We need the following characterization of paw-free graphs due to Olariu~\cite{Ol88}. 

\begin{lemma}[\cite{Ol88}]\label{l-ol}
Every connected $(\overline{P_1+P_3})$-free graph is either triangle-free or $({P_1+P_2})$-free.
\end{lemma}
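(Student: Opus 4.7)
The plan is to prove Olariu's classical dichotomy by showing that in a connected paw-free graph containing a triangle, non-adjacency is (essentially) an equivalence relation on the vertex set, which is equivalent to the graph being complete multipartite, i.e., $(P_1+P_2)$-free.

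First, I would establish the following local observation: if $T=\{a,b,c\}$ is a triangle in any paw-free graph $G$, then every vertex $v\in V(G)\setminus T$ is adjacent to either $0$, $2$, or $3$ vertices of $T$. Indeed, if $v$ were adjacent to exactly one vertex of $T$, say $a$, then $\{v,a,b,c\}$ would induce a paw (the triangle $abc$ with pendant edge $va$), contradicting paw-freeness.

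Now assume $G$ is a connected paw-free graph containing a triangle, and suppose for contradiction that $G$ contains an induced $P_1+P_2$ on vertices $\{x,y,z\}$ with $yz\in E(G)$ and $x$ non-adjacent to both $y$ and $z$. Since $G$ is connected, there is a shortest path $P$ from $x$ to $\{y,z\}$ in $G$. I would analyse the endpoint of $P$ adjacent to (say) $y$ together with its predecessor on $P$. The edge $yz$ can be extended to a triangle using the local observation and the fact that $G$ contains some triangle, and then applying the observation again to this triangle with the vertex on $P$ will force an induced paw. This contradiction shows that no induced $P_1+P_2$ can exist, so $G$ is $(P_1+P_2)$-free.

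The main obstacle is the case where the edge $yz$ does not itself lie on a triangle of $G$. Overcoming this requires propagating triangle membership through the connectivity of $G$: one walks from the guaranteed triangle to the edge $yz$ along a shortest path in $G$, at each step invoking the $\{0,2,3\}$-adjacency observation to show that the current triangle can be ``translated'' to a neighbouring edge while remaining a triangle. This is the standard induction strategy underlying Olariu's original argument, and it yields a triangle containing $yz$ (or the desired paw), completing the proof.
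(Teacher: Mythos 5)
The paper does not actually prove this lemma: it is quoted directly from Olariu~\cite{Ol88} as a known result, so there is no in-paper argument to compare against. Judged on its own, your strategy is sound and all of its ingredients are correct: the $\{0,2,3\}$-adjacency observation is right (exactly one neighbour on a triangle yields a paw), and a $(P_1+P_2)$-free graph is indeed what you want to force. The only place you are arguing by assertion is the ``propagation'' step, and your path-based version of it is fiddlier than necessary (the predecessor of the last path vertex may attach to the third triangle vertex rather than to $y$ or $z$, so the iteration does not obviously terminate in a paw). The clean way to close this is to note that for \emph{any} triangle $T$ in a paw-free graph, $N[T]$ is closed under taking neighbours: if $v$ sees two vertices $a,b$ of $T$ and $w$ sees $v$, then applying your observation to the triangle $\{v,a,b\}$ forces $w$ to see $a$ or $b$, so $w\in N(T)$. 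Connectivity then gives $V(G)=N[T]$, i.e.\ every vertex outside $T$ sees at least two vertices of $T$. This immediately yields that every edge $yz$ lies on a triangle (both endpoints see at least two of the three vertices of $T$, so they share a neighbour that is adjacent to both), and then applying $V(G)=N[T']$ to the triangle $T'\supseteq\{y,z\}$ shows any vertex $x$ must see at least two of its vertices, hence at least one of $y,z$ --- so no induced $P_1+P_2$ exists. With that substitution your proof is complete and is essentially Olariu's original argument.
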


\noindent
We are now ready to prove the result.

\begin{theorem}\label{t-www}
{\sc Weighted Subset Odd Cycle Transversal} is polynomial-time solvable for $(P_1+P_3)$-free graphs.
\end{theorem}

\begin{proof}
  Let $G$ be a $(P_1+P_3)$-free graph.
  We present a polynomial-time algorithm for the optimization problem, where we seek to find $S_T \subseteq V(G)$ such that $S_T$ is a minimum-weight odd $T$-cycle transversal.
  Note that for such an $S_T$, the set $B_T = V(G) \setminus S_T$ is a maximum-weight set such that $G[B_T]$ is a $T$-bipartite graph.

  In $\overline{G}$, each connected component $D$ is $(\overline{P_1+P_3})$-free.
  By Lemma~\ref{l-ol}, $D$ is either triangle-free or $(P_1+P_2)$-free in $\overline{G}$; that is, $D$ is $3P_1$-free or $P_3$-free in $G$.
  Let $D_1, D_2, \dotsc, D_\ell$ be the co-components of $G$.

  Let $B_T \subseteq V(G)$ such that $G[B_T]$ is a $T$-bipartite graph.  For now, we do not require that $B_T$ has maximum weight.
  We start by considering some properties of such a set $B_T$.
  Observe that $G - T$ is a $T$-bipartite graph, so we may have $B_T \cap T = \emptyset$.

\smallskip
\noindent
{\it Claim~1. If $B_T \cap T \neq \emptyset$, then $B_T \subseteq V(D_i) \cup V(D_j)$ for some $i,j \in \{1,2,\dotsc,\ell\}$.}

\medskip
\noindent
We prove Claim~1 as follows.
  Suppose $u \in B_T \cap T$, say $u \in V(D_i)$ for some $i\in \{1,\ldots,r\}$.
  The claim holds if $B_T \subseteq V(D_1)$, so suppose $v \in B_T \setminus V(D_i)$.
  Then $v \in V(D_j)$ for some $j \in \{2,\dotsc,\ell\}$ with $j\neq i$.
  If $B_T$ also contains a vertex $v' \in D_{j'}$ for some $j' \in \{2,\dotsc,\ell\} \setminus \{i,j\}$, then $\{u,v,v'\}$ induces a triangle in $G$, since $D_i$, $D_j$, and $D_{j'}$ are co-components.
  As this triangle contains $u \in T$, it is an odd $T$-cycle of $G[B_T]$, a contradiction. \dia
  
\medskip
\noindent
Note that Claim~1 implies that $B_T$ meets at most two co-components of $G$ when $B_T \cap T \neq \emptyset$.  The next two claims consider the case when $B_T$ meets precisely two co-components of $G$.

\medskip
\noindent
{\it Claim~2. Suppose $B_T \cap T \neq \emptyset$ and there exist distinct $i,j \in \{1,\dotsc,\ell\}$ such that $B_T \cap V(D_i) \neq \emptyset$ and $B_T \cap V(D_j) \neq \emptyset$.
If $B_T \cap V(D_i)$ contains a vertex of $T$, then $B_T \cap V(D_j)$ is an independent set.}

\medskip
\noindent
We prove Claim~2 as follows.  Suppose it is not true.  Then, without loss of generality, $G[B_T \cap V(D_1)]$ contains an edge $u_1v_1$, and $B_T \cap V(D_2)$ contains a vertex $t \in T$.
  But then $\{u_1,v_1,t\}$ induces a triangle of $G$, since $V(D_1)$ is complete to $V(D_2)$, so $G[B_T]$ contains a contradictory odd $T$-cycle.
  \dia

\medskip
\noindent
{\it Claim~3. Suppose $B_T \cap T \neq \emptyset$ and there exist distinct $i,j \in \{1,\dotsc,\ell\}$ such that $B_T \cap V(D_i) \neq \emptyset$ and $B_T \cap V(D_j) \neq \emptyset$.
  Either
  \begin{itemize}
    \item $B_T \cap V(D_i)$ and $B_T \cap V(D_j)$ are independent sets of $G$, or
    \item $|B_T \cap V(D_i)| = 1$ and $B_T \cap V(D_i) \cap T = \emptyset$, and $B_T \cap V(D_j)$ is a $T$-independent set, up to swapping $i$ and $j$.
  \end{itemize}}

\medskip
  \noindent
  We prove Claim~3 as follows.
  Suppose $B_T$ meets $D_1$ and $D_2$, but $G[B_T \cap V(D_1)]$ contains an edge $u_1v_1$.
  Then, by Claim~2, $B_T \cap V(D_2)$ is disjoint from $T$.
  But $B_T \cap T \neq \emptyset$, so $B_T \cap V(D_1)$ contains some $t \in T$.
  Again by Claim~2, $B_T \cap V(D_2)$ is independent.
  It remains to show that $B_T \cap V(D_1)$ is a $T$-independent set and that $|B_T \cap V(D_2)| = 1$.
  Suppose $B_T \cap V(D_1)$ contains an edge $tw_1$, where $t \in T$.
  Then for any vertex $w_2 \in B_T \cap V(D_2)$, we have that $\{t,w_1,w_2\}$ induces a triangle, so $G[B_T]$ has a contradictory odd $T$-cycle.
  We deduce that each vertex of $T$ in $B_T \cap V(D_1)$ is isolated in $G[B_T \cap V(D_1)]$.
  Now suppose there exist distinct $w_2,w_2' \in B_T \cap V(D_2)$.
  Then $tw_2u_1v_1w_2't$ is an odd $T$-cycle, a contradiction.
  So $|B_T \cap V(D_i)| = 1$.
  \dia

\medskip
\noindent
  We now describe the polynomial-time algorithm.
  Our strategy is to compute, in polynomial time, a collection of ${\mathcal O}(n^2)$ sets $B_T$ such that $G[B_T]$ is $T$-bipartite, where a maximum-weight $B_T$ is guaranteed to be in this collection.
  It then suffices to output a set from this collection of maximum weight.

  First, we compute the co-components $D_1, D_2, \dotsc, D_\ell$ of $G$.
  For each of the $\ell={\mathcal O}(n)$ co-components, we can recognise whether it is $P_4$-free in linear time.
  If it is not, then it is also not $P_3$-free, so it is $3P_1$-free, by Lemma~\ref{l-ol}.
  If it is, then we can compute if it has an independent set of size at least three in linear time.
  Thus we determine whether the co-component is $3P_1$-free, or $P_3$-free.

  Now, for each co-component $D$, we solve {\sc Weighted Subset Odd Cycle Transversal} for $D$.
  Note that we can do this in polynomial time by Theorem~\ref{t-3p1p2} if $D$ is $3P_1$-free; otherwise, by Theorem~\ref{tt-p4} if $D$ is $P_3$-free.

  Now we consider each pair $\{D_1,D_2\}$ of distinct co-components.  Note there are ${\mathcal O}(n^2)$ pairs to consider.
  For each pair we will compute three sets $B_T$ such that $G[B_T]$ is $T$-bipartite.
  \begin{enumerate}
    \item We compute a maximum-weight independent set $I_1$ of $D_1$, and a maximum-weight independent set $I_2$ of $D_2$, where the weightings are inherited from the weighting~$w$ of $G$. Set $B_T = I_1 \cup I_2$.
      Note that $G[I_1 \cup I_2]$ is a complete bipartite graph, so it is certainly $T$-bipartite.
      We can compute these independent sets in polynomial time when restricted to $3P_1$- or $P_3$-free graphs (for example, see \cite{LVV14}).

    \item We select a maximum-weight vertex $v_1$ from $V(D_1) \setminus T$, and compute a maximum-weight $T$-independent set $I_2$ of $D_2$.  When $D_2$ is $3P_1$-free, we can solve this in polynomial time by Lemma~\ref{l-wsis}.
      On the other hand, when $D_2$ is $P_3$-free, we solve the complementary problem, in polynomial time, by Lemma~\ref{svc-p4}.
      Set $B_T = I_2 \cup \{v_1\}$.  Note that $G[B_T]$ is $T$-bipartite, since every vertex of $T$ has degree~1 in $G[B_T]$ (its only neighbour is $v_1$).

    \item This case is the symmetric counterpart to the previous: choose a maximum-weight vertex $v_2$ of $V(D_2) \setminus T$, compute a maximum-weight $T$-independent set $I_1$ of $D_1$, and set $B_T = I_1 \cup \{v_2\}$.
  \end{enumerate}

\noindent
  Finally, take the maximum-weight $B_T$ among the (at most) $3\binom{\ell}{2} + \ell + 1$ possibilities described, where the final possibility is that $B_T = V(G)\setminus S_T$.

\medskip
\noindent
  To prove correctness of this algorithm, suppose $B_T$ is a maximum-weight set such that $G[B_T]$ is $T$-bipartite.
  If $B_T \subseteq V(G) \setminus T$, then certainly the algorithm will either output $V(G) \setminus T$ or another solution with weight equal to $w(B_T)$.
  So we may assume that $B_T \cap T \neq \emptyset$.
  Now, by Claim~1, $B_T$ meets one or two co-components of $G$.
  If it meets exactly one co-component~$D$, then $B_T$ is a maximum-weight set such that $D[B_T]$ is $T$-bipartite, which will be found by the algorithm in the first phase.
  If it meets two co-components $D_1$ and $D_2$, then the correctness of the algorithm follows from Claim~3.
  This concludes the proof. \qed
\end{proof}

\section{The Proof of Theorem~\ref{t-realmain}}\label{s-rm}

To add to the new algorithms of the previous two sections, we need an algorithm for $P_4$-free graphs as well.
In fact we can show that {\sc Weighted Subset Odd Cycle Transversal} is polynomial-time solvable for $P_4$-free graphs
by an obvious adaptation of the
proof of the unweighted variant of {\sc Subset Odd Cycle Transversal}  from~\cite{BJPP20}. For completeness, we give the proof in Appendix~\ref{a-p4}.
However, we remark that the result  also follows from a theorem of Courcelle et al.~\cite{CMR} that shows that on graph classes of bounded clique-width, certain optimization problems have linear time algorithms.  We do not discuss the details, but it is enough to observe that $P_4$-free graphs have bounded clique-width, and that verifying a {\sc Yes} instance of the decision version of 
\textsc{Subset Odd Cycle Transversal} can be expressed in \textsf{MSO}$_1$ monadic second-order logic.

\begin{theorem}\label{tt-p4}
{\sc Weighted Subset Odd Cycle Transversal} is polynomial-time solvable for $P_4$-free graphs.
\end{theorem}

We also need one new hardness result.
The result is an analogue of Papadopoulos and Tzimas's hardness result for {\sc Weighted Subset Feedback Vertex Set} on $5P_1$-free graphs \cite[Theorem~2]{PT20}. 
The proof is essentially identical, as all the relevant $T$-cycles in the constructed {\sc Weighted Subset Feedback Vertex Set} instance are odd.
We provide the full proof in Appendix~\ref{a-5p1} for completeness.

\begin{theorem} \label{tt-5p1}
{\sc Weighted Subset Odd Cycle Transversal} is \NP-complete for $5P_1$-free graphs.
\end{theorem}

\noindent
We are now ready to prove Theorem~\ref{t-realmain}.

\medskip
\noindent
\faketheorem{Theorem~\ref{t-realmain} (restated).}
{\it 
Let $H$ be a graph with $H\notin \{2P_1+P_3,P_1+P_4,2P_1+P_4\}$. 
Then {\sc Weighted Subset Odd Cycle Transversal} on $H$-free graphs is polynomial-time solvable if 
$H\ssi 3P_1+P_2$, $P_1+P_3$, or $P_4$, and is \NP-complete otherwise.}

\begin{proof}
We first recall the result of~\cite{CHJMP18} that 
 {\sc Odd Cycle Transversal}, that is, {\sc Weighted Subset Odd Cycle Transversal} where $T=\emptyset$ and $w\equiv 1$, 
is \NP-complete on $H$-free graphs if $H$ has a~cycle~or~a~claw.
In the remaining case $H$ is a linear forest.
If $H$ contains an induced $2P_2$, then we use a result of~\cite{BJPP20}, which states that 
{\sc Subset Odd Cycle Transversal} is \NP-complete for split graphs, or equivalently, $(C_4,C_5,2P_2)$-free graphs.
If $H$ contains an induced $5P_1$, then we use Theorem~\ref{tt-5p1}.
In the other cases, we use Theorems~\ref{t-3p1p2},~\ref{t-www} and~\ref{tt-p4}. \qed
\end{proof}

\section{Weighted Subset Feedback Vertex Set for $(3P_1+P_2)$-free graphs}\label{a-easier}

Again we follow the framework of Section~\ref{s-general}.
Recall that a  subgraph of a graph $G=(V,E)$ is a  $T$-forest if it has no $T$-cycles. Recall also that a subset $S_T\subseteq V$ is a $T$-feedback vertex set if and only if $G[V\setminus S_T]$ is a $T$-forest and that we write $F_T=V\setminus S_T$ in this case. 
Recall that if $u\in F_T$ belongs to at least one cycle of $G[F_T]$, then $u$ is a cycle vertex of $F_T$ and otherwise $u$ is forest vertex of $F_T$, and that by definition every vertex in $T\cap F_T$ is a forest vertex.
We let $O(F_T)$ denote the set of cycle vertices of $F_T$ and $R(F_T)$ denote the set of forest vertices of $F_T$ and note that
$V(F_T)=O(F_T)\cup R(F_T)$.

Let $(G,T,w)$ be an instance of {\sc Weighted Subset Feedback Vertex Set}, where $G$ is $3P_1+P_2)$-free graph. 
For Step~1 in the framework of Section~\ref{s-general}, we must compute, in polynomial time, a neutral solution, that is, a minimum weight feedback vertex set.
For this we can use a straightforward generalization of the result of~\cite{CHJMP18} that states that the unweighted variant 
is polynomial-time solvable on $sP_2$-free graphs for every integer $s\geq 1$, or we can use the following recent result.

\begin{theorem}[\cite{PPR21}]\label{t4}
For every integer $s\geq 1$, {\sc Weighted Feedback Vertex Set} is polynomial-time solvable for $sP_3$-free graphs.
\end{theorem}

Due to Theorem~\ref{t4} we can do Step~1 in polynomial time: take $s=4$. 
As Step~2 is trivial, we need to focus on Step~3: computing a mixed solution of minimum weight. For this step, we will again reduce to
{\sc Weighted Vertex Cut}.

For a mixed solution $S_T$, we write $O=O(F_T)$ and $R=R(F_T)$ and note that $O$ and $R\cap T$ are both nonempty (by definition).
We need a number of structural results on mixed solutions, which we will present in a sequence of lemmas.\footnote{These proofs are very similar to analogous lemmas from Section~\ref{s-example} but they are simpler. Moreover, we do not need to translate all  lemmas, as we do not need to deal with 2-clique solutions either.}

\begin{lemma}\label{l-claim1f}
Let $G=(V,E)$ be a $(3P_1+P_2)$-free graph with a vertex weighting~$w$, and let $T\subseteq V$.
For every mixed solution $S_T$, the graph $G[O]$ has at most two connected components.
\end{lemma}

\begin{proof}
For contradiction, assume that $G[O]$ has at least three connected components $D_1$, $D_2$, $D_3$. As each $D_i$ contains a cycle, each $D_i$ has an edge. Hence, each $D_i$ must be a complete graph; otherwise one $D_i$, say $D_1$ has two non-adjacent vertices, which would induce together with a vertex of $D_2$ and an edge of $D_3$, a $3P_1+P_2$.

Recall that, as $S_T$ is mixed, $R$ is nonempty. Let $u\in R$. Then $u$ does not belong to any $D_i$.
Moreover, $u$ can be adjacent to at most one vertex of each $D_i$; otherwise $u$ and two of its neighbours of $D_i$ would form a triangle (as $D_i$ is complete) and $u$ would not be a forest vertex. As each $D_i$ is a complete graph on at least three vertices, we can pick two non-neighbours of $u$ in $D_1$, which form an edge, a non-neighbour of $u$ in $D_2$ and a non-neighbour of $u$ in $D_3$. These four vertices induce, together with $u$, a $3P_1+P_2$, a contradiction. \qed
\end{proof}

\begin{lemma}\label{l-claim4fis3}
Let $G=(V,E)$ be a $(3P_1+P_2)$-free graph with a vertex weighting~$w$, and let $T\subseteq V$.
For every mixed solution $S_T$, if $G[O]$ is disconnected, then $R$ is a clique with~$|R|\leq 2$.
\end{lemma}

\begin{proof}
For contradiction, suppose $R$ contains two non-adjacent vertices $u_1$ and $u_2$. As $G[O]$ is disconnected, $G[O]$ has exactly two connected components $D$ and $D'$ by Lemma~\ref{l-claim1f}.
Then $D$ contains 
a cycle $C$ on vertices $v_1,\ldots,v_r$ for some $r\geq 3$ and $D'$ contains a cycle $C'$ on vertices $w_1,\ldots,w_s$ for some $s\geq 3$.

Now, $u_1$ and $u_2$ are adjacent to at most one vertex of $C$, as otherwise they lie on a cycle in $G[F_T]$. Hence, as $r\geq 3$, we may assume that $v_1$ is neither adjacent to $u_1$ nor to $u_2$. Hence, at least one of $u_1$, $u_2$ has a neighbour in $\{w_1,w_2\}$, say $u_1$ is adjacent to $w_1$; otherwise $\{u_1,u_2,v_1,w_1,w_2\}$ would induce a $3P_1+P_2$. By the same argument, one of $u_1$, $u_2$ must have a neighbour in $\{w_2,w_3\}$. As $u_1$ already has a neighbour on $C'$, we find that $u_1$ cannot be adjacent to $w_2$ or $w_3$; otherwise $u_1$ would be on a cycle in $G[F_T]$. Hence, $u_2$ must be adjacent to either $w_2$ or $w_3$. So $u_1$ and $u_2$ each have a neighbour on $C'$ and these neighbours are not the same.

By the same reasoning, but with the roles of $C$ and $C'$ reversed, we find that $u_1$ and $u_2$ also have (different) neighbours on $C$. However, we now find that there exists a cycle using $u_1$, $u_2$ and appropriate paths $P_C$ and $P_{C'}$ between their neighbours on $C$ and $C'$, respectively. We conclude that $R$ is a clique. As $G[R]$ is bipartite, this means that $|R|\leq 2$. \qed
\end{proof} 

\begin{lemma}\label{l-claim2f}
Let $G=(V,E)$ be a $(3P_1+P_2)$-free graph with a vertex weighting~$w$, and let $T\subseteq V$.
For every mixed solution $S_T$, every independent set in $G[R]$ has size at most~$4$.
\end{lemma}

\begin{proof}
Suppose that $R$ contains an independent set $I=\{u_1,\ldots,u_5\}$ of five vertices. As $O$ is nonempty, $G[F_T]$ has a cycle $C$. 
Let $v_1$, $v_2$, $v_3$ be consecutive vertices of $C$ in that order. As $G$ is $(3P_1+P_2)$-free, $v_1v_2\in E$ and $\{u_1,u_2,u_3\}$ is independent, one of $v_1$, $v_2$ is adjacent to one of $u_1,u_2,u_3$; say $v_1$ is adjacent to $u_1$. 
Then $v_1$ must be adjacent to at least two vertices of $\{u_2,u_3,u_4,u_5\}$; otherwise three non-neighbours of $v_1$ in $\{u_2,u_3,u_4,u_5\}$, together with the edge $u_1v_1$, would induce a $3P_1+P_2$. Hence, we may assume without loss of generality that $v_1$ is adjacent to $u_2$ and $u_3$.

Let $i\in \{1,2,3\}$. As $u_i$ is adjacent to $v_1$ and $C$ is a cycle,
$u_i$ cannot be adjacent to $v_2$ or $v_3$; otherwise $u$ would belong to a cycle, so $u$ would not be a forest vertex.
Hence, $\{u_1,u_2,u_3,v_2,v_3\}$ induces a $3P_1+P_2$, a contradiction. \qed
\end{proof}

\noindent
Just as in Section~\ref{s-example}, we say that a vertex in $O$ is a {\it connector} if it has a neighbour in $R$. The proof of the following two lemmas are significantly simpler than the two analogous lemmas (Lemmas~\ref{l-claim5} and~\ref{l-claim3}) from Section~\ref{s-example}.

\begin{lemma}\label{l-claim5f}
Let $G=(V,E)$ be a $(3P_1+P_2)$-free graph with a vertex weighting~$w$, and let $T\subseteq V$.
For every mixed solution $S_T$, if $G[O]$ has two connected components $D_1$ and $D_2$, then $D_1$ and $D_2$ each have at most one connector.
\end{lemma}

\begin{proof}
By Lemma~\ref{l-claim4fis3}, we find that $R$ is a clique (of size at most~$2$). Then both $D_1$ and $D_2$ each have at most one connector; otherwise there would be a cycle passing through $R$, contradicting the fact that the vertices of $R$ are forest vertices. \qed
\end{proof}

\begin{lemma}\label{l-claim3f}
Let $G=(V,E)$ be a $(3P_1+P_2)$-free graph with a vertex weighting~$w$, and let $T\subseteq V$.
For every mixed solution $S_T$, if $G[O]$ is connected, then $O$ has no two connectors with a neighbour in the same connected component of $G[R]$.
 \end{lemma}
 
 \begin{proof}
 Let $A$ be a connected component of $G[R]$. For contradiction, assume that $O$ has two distinct connectors $v_1$ and $v_2$ that have a neighbour $u_1$ and $u_2$, respectively (possibly $u_1=u_2$) in $A$. Then we can take the cycle $u_1v_1Pv_2u_2Qu_1$, where $P$ is a path from $v_1$ to $v_2$ in $O$, and $Q$ is a path from $u_2$ to $u_1$ in $R$ implying that $u_1$ is not a forest vertex, a contradiction. \qed
 \end{proof}
 
\noindent
We are now ready to prove the result, which we do in exactly the same way as in the proof of Theorem~\ref{t-3p1p2}.

\begin{theorem}\label{t-ww}
{\sc Weighted Subset Feedback Vertex Set} is polynomial-time solvable for $(3P_1+P_2)$-free graphs.
\end{theorem}

\begin{proof}
Let $G=(V,E)$ be a $(3P_1+P_2)$-free graph with a vertex weighting~$w$, and let $T\subseteq V$.
We describe a polynomial-time algorithm for the optimization version of the problem on input $(G,T,w)$.
So, in Step~1, we first compute a neutral solution of minimum weight, that is, a minimum weight feedback vertex set.
We can do this in polynomial time by Lemma~\ref{l-sp2} (take $s=4$).
We then compute, in Step~2, a $T$-full solution by setting $S_T=T$.
It remains to compute a mixed solution $S_T$ of minimum weight (Step~3) and compare its weight with the weight of the two solutions found above (Step~4).

By Lemma~\ref{l-claim1f} we can distinguish between two cases: $G[O]$ is connected or $G[O]$ consists of two connected components. We compute a mixed solution of minimum weight for each type.

\medskip
\noindent
{\bf Case 1.} $G[O]$ is connected.\\
By Lemma~\ref{l-claim2f} and the fact that $G[R]$ is bipartite by definition, we find that $|R|\leq 8$. We consider all ${\mathcal O}(n^8)$ possibilities for $R$. For each choice of $R$ we do the following. We discard $R$ if $G[R]$ is not a forest.
If $G[R]$ is a forest, then we compute a solution $S_T$ of minimum weight such that $F_T$ contains $R$.

We let $A_1,\ldots,A_p$ be the set of connected components of $G[R]$. Note that $p\geq 1$ by definition and that $p\leq 4$ by Lemma~\ref{l-claim2f}. By Lemma~\ref{l-claim3f}, $O$ has at most $p\leq 4$ connectors. 

We now consider all ${\mathcal O}(n^4)$ possible choices for a set $D$ of at most four connectors. For each set $D$ we do as follows. We first check that $G[D\cup R]$ is a $T$-forest and that there are no two vertices in $D$ with a neighbour in the same $A_i$; if either of these two conditions is not satisfied, then we discard our choice of $D$. If both conditions are satisfied, then we put the vertices of $D$ in  $O$, together with any vertex that is not in $T$ and that is not adjacent to any vertex of $R$. Then, as $G[D\cup R]$ is a $T$-forest and no two vertices in $D$ are adjacent to the same component $A_i$, the graph $G[R\cup O]$ is a $T$-forest, and we remember the weight of $S_T=V\setminus (R\cup O)$.

We note that in the above, we may have computed a set~$O$ that is disconnected or that contains forest vertices. So our algorithm might compute some solutions more than once. However, we can compute each solution in Case~1 in polynomial time, and the total number of solutions we compute in Case~1 is ${\mathcal O}(n^8)\cdot {\mathcal O}(n^4)={\mathcal O}(n^{12})$, which is polynomial as well. Hence, out of all the mixed solutions we found so far, we can safely pick a solution $S_T=V_T\setminus (R\cup O)$ with minimum weight, as the output for Case~1.

\medskip
\noindent
{\bf Case 2.} $G[O]$ consists of two connected components $D_1$ and $D_2$.\\
By Lemma~\ref{l-claim4fis3}, the set $R$ is a clique of size at most~$2$.
We consider all possible ${\mathcal O}(n^2)$ options of choosing $R$. If $R$ is not a clique, we discard~$R$. If $R$ is a clique, then we proceed as below.

By Lemma~\ref{l-claim5}, both $D_1$ and $D_2$ have at most one connector.
We consider all ${\mathcal O}(n^2)$ possible ways to choose at most one connector for $D_1$ and at most one connector for $D_2$. If we choose two connectors, they must be non-adjacent.
We discard the choice if the subgraph of~$G$ induced by $R$ and the chosen connector(s) is not a $T$-forest. Otherwise we continue as follows.

In the case where our algorithm chooses at most one connector $v$, we let $O$ consist of $v$ and all vertices that do not belong to $T$ and that do not have a neighbour in $R$. Then $G[R\cup O]$ is a $T$-forest and we store $S_T=V\setminus (R\cup O)$. Note that $O$ might not induce two connected components consisting of cycle vertices, so we may duplicate some work. However, $R\cup O$ induces a $T$-forest, and we found $O$ in polynomial time, and this is what is relevant (together with the fact that we only use polynomial time).

In the case where the algorithm chooses two (non-adjacent) connectors $v$ and $v'$ we proceed as follows. We remove any vertex from $T$ and any neighbour of $R$ other than $v$ and~$v'$. Let $(G',w')$ be the resulting weighted graph. We then 
solve {\sc Weighted Vertex Cut} in polynomial time on $G'$, $w'$ and with $v$ and $v'$ as terminals. Let $S$ be the output. We let $O=V(G')-S$. Note that $G[O]$ consists of two connected components, but it might contain forest vertices. However, $G[R\cup O]$ is a $T$-forest, and we found $O$ in polynomial time, and this is what is relevant. We remember
the solution $S_T=V\setminus (R\cup O)$.

In the end we remember from all the solutions we computed one with minimum weight as the output for Case~2. Note that the number of solutions is ${\mathcal O}(n^2)\cdot {\mathcal O}(n^2)={\mathcal O}(n^4)$ and we found each solution in polynomial time. Hence, processing Case~2 takes polynomial time.

 \medskip
\noindent
The correctness of the algorithm now follows from the correctness of the two case descriptions, which describe all possible mixed solutions due to Lemma~\ref{l-claim1f}. As the algorithm processes each of the two cases in polynomial time, it computes a mixed solution of minimum weight in polynomial time. We already deduced that computing a non-mixed solution of minimum weight takes polynomial time as well. Hence, the total running time is polynomial. \qed
\end{proof}

\noindent
{\bf Remark.} The restriction of our algorithm for $(3P_1+P_2)$-free graphs in Theorem~\ref{t-ww} to the known case where $H=4P_1$~\cite{PT20} would allow for a slightly simpler proof of Lemma~\ref{l-claim1f} and a simpler proof of Lemma~\ref{l-claim4fis3}, whereas the proof of Lemma~\ref{l-claim2f} becomes trivial and can be improved to the immediate claim that every independent set in $G[R]$ has size at most~$3$. The (short) proofs of Lemmas~\ref{l-claim5f} and~\ref{l-claim3f} stay the same. Moreover, the proof of  Theorem~\ref{t-ww} stays the same, but due to the improvement of the statement of Lemma~\ref{l-claim2f} we obtain a better running time: the running time of Case~1 improves with a factor of ${\mathcal O}(n^3)$, as the number of solutions ${\mathcal O}(n^{12})$ in Case~1 now becomes ${\mathcal O}(n^6)\cdot {\mathcal O}(n^3)={\mathcal O}(n^9)$.

\section{Weighted Subset Feedback Vertex Set for $(P_1+P_3)$-free graphs}\label{a-p1p3f}

The proof is very similar to the proof of Theorem~\ref{t-www}, but we provide it in its entirety here, for completeness.
Note that we also use the definitions from Section~\ref{p4+p1p3}.

\begin{theorem} \label{t-fvsp1p3}
{\sc Weighted Subset Feedback Vertex Set} is polynomial-time solvable for $(P_1+P_3)$-free graphs.
\end{theorem}

\begin{proof}
  Let $G$ be a $(P_1+P_3)$-free graph.
  We present a polynomial-time algorithm for the optimization problem, where we seek to find $S_T \subseteq V(G)$ such that $S_T$ is a minimum-weight $T$-feedback vertex set.
  Note that for such an $S_T$, the set $F_T = V(G) \setminus S_T$ is a maximum-weight set such that $G[F_T]$ is a $T$-forest.

  In $\overline{G}$, each connected component $D$ is $(\overline{P_1+P_3})$-free.
  By Lemma~\ref{l-ol}, $D$ is either triangle-free or $(P_1+P_2)$-free in $\overline{G}$; that is, $D$ is $3P_1$-free or $P_3$-free in $G$.
  Let $D_1, D_2, \dotsc, D_\ell$ be the co-components of $G$.

  Let $F_T \subseteq V(G)$ such that $G[F_T]$ is a $T$-forest.  For now, we do not require that $F_T$ has maximum weight.
  We start by considering some properties of such a set $F_T$.
  Observe that $G - T$ is a $T$-forest, so we may have $F_T \cap T = \emptyset$ (that is, the solution $S_T$ is $T$-full).

\medskip
\noindent
{\it Claim~1. If $F_T \cap T \neq \emptyset$, then $F_T \subseteq V(D_i) \cup V(D_j)$ for some $i,j \in \{1,2,\dotsc,\ell\}$.}

\medskip
\noindent
We prove Claim~1 as follows.
  Suppose $u \in F_T \cap T$, say $u \in V(D_i)$ for some $i\in \{1,2,\ldots,\ell\}$.
  The claim holds if $F_T \subseteq V(D_i)$, so suppose $v \in F_T \setminus V(D_i)$.
  Then $v \in V(D_j)$ for some $j \in \{1,\dotsc,\ell\}$ with $j\neq i$.
  If $F_T$ also contains a vertex $v' \in D_{j'}$ for some $j' \in \{1,\dotsc,\ell\} \setminus \{i,j\}$, then $\{u,v,v'\}$ induces a triangle in $G$, since $D_i$, $D_j$, and $D_{j'}$ are co-components.
  As this triangle contains $u \in T$, it is a $T$-cycle of $G[F_T]$, a contradiction. \dia
  
\medskip
\noindent
Note that Claim~1 implies that $F_T$ meets at most two co-components of $G$ when $F_T \cap T \neq \emptyset$.  The next two claims consider the case when $F_T$ meets precisely two co-components of $G$.

\medskip
\noindent
{\it Claim~2. Suppose $F_T \cap T \neq \emptyset$ and there exist distinct $i,j \in \{1,\dotsc,\ell\}$ such that $F_T \cap V(D_i) \neq \emptyset$ and $F_T \cap V(D_j) \neq \emptyset$.
If $F_T \cap V(D_i)$ contains a vertex of $T$, then $F_T \cap V(D_j)$ is an independent set.}

\medskip
\noindent
We prove Claim~2 as follows. Suppose it is not true. Then, without loss of generality, $G[F_T \cap V(D_1)]$ contains an edge $u_1v_1$, and $F_T \cap V(D_2)$ contains a vertex $t \in T$.
  But then $\{u_1,v_1,t\}$ induces a triangle of $G$, since $V(D_1)$ is complete to $V(D_2)$, so $G[F_T]$ contains a contradictory $T$-cycle.
  \dia

\medskip
\noindent
{\it Claim~3. Suppose $F_T \cap T \neq \emptyset$ and there exist distinct $i,j \in \{1,\dotsc,\ell\}$ such that $F_T \cap V(D_i) \neq \emptyset$ and $F_T \cap V(D_j) \neq \emptyset$.
  Either
  \begin{enumerate}
    \item $F_T \cap V(D_i)$ and $F_T \cap V(D_j)$ are independent sets of $G$, and $|F_T \cap V(D_h)| = 1$ for some $h \in \{i,j\}$; or
    \item $|F_T \cap V(D_i)| = 1$ and $F_T \cap V(D_i) \cap T = \emptyset$, and $F_T \cap V(D_j)$ is a $T$-independent set, up to swapping $i$ and $j$.
  \end{enumerate}}

\smallskip
  \noindent
 We prove Claim~3 as follows. 
  Suppose $F_T$ meets $D_1$ and $D_2$, but $G[F_T \cap V(D_1)]$ contains an edge $u_1v_1$.
  Then, by Claim~2, $F_T \cap V(D_2)$ is disjoint from $T$.
  But $F_T \cap T \neq \emptyset$, so $F_T \cap V(D_1)$ contains some $t \in T$.
  Again by Claim~2, $F_T \cap V(D_2)$ is independent.
  We claim that $F_T \cap V(D_1)$ is a $T$-independent set and that $|F_T \cap V(D_2)| = 1$, so that (2) holds.
  Suppose $F_T \cap V(D_1)$ contains an edge $tw_1$, where $t \in T$.
  Then for any vertex $w_2 \in F_T \cap V(D_2)$, we have that $\{t,w_1,w_2\}$ induces a triangle, so $G[F_T]$ has a contradictory $T$-cycle.
  We deduce that each vertex of $T$ in $F_T \cap V(D_1)$ is isolated in $G[F_T \cap V(D_1)]$.
  Now suppose there exist distinct $w_2,w_2' \in F_T \cap V(D_2)$.
  Then $tw_2u_1v_1w_2't$ is an $T$-cycle, a contradiction.
  So $|F_T \cap V(D_1)| = 1$.  Thus (2) holds.

  Now suppose $F_T$ meets $D_1$ and $D_2$, but $F_T \cap V(D_h)$ is an independent set of $G$ for $h \in \{1,2\}$.
  We may also assume, without loss of generality, that $T \cap F_T \cap V(D_1) \neq \emptyset$, since $F_T \cap T \neq \emptyset$.
  Towards a contradiction, suppose $|F_T \cap V(D_1)|, |F_T \cap V(D_2)| \ge 2$.
  Then there exist distinct $v_1,t_1 \in F_T \cap V(D_1)$ such that $t \in T$, and there exist distinct $v_2,v_2' \in F_T \cap V(D_2)$.
  But now $t_1v_2v_1v_2't_1$ is a $T$-cycle of $G[F_T]$, a contradiction.
  We deduce that $|F_T \cap V(D_h)|=1$ for some $h \in \{i,j\}$, as required.
  \dia

\medskip
\noindent
  We now describe the polynomial-time algorithm.
  Our strategy is to compute, in polynomial time, a collection of ${\mathcal O}(n^2)$ sets $F_T$ such that $G[F_T]$ is a $T$-forest, where a maximum-weight $F_T$ is guaranteed to be in this collection.
  It then suffices to output a set from this collection of maximum weight.

  First, we compute the co-components $D_1, D_2, \dotsc, D_\ell$ of $G$.
  For each of the $\ell={\mathcal O}(n)$ co-components, we can recognise whether it is $P_4$-free in linear time.
  If it is not, then it is also not $P_3$-free, so it is $3P_1$-free, by Lemma~\ref{l-ol}.
  If it is, then we can compute if it has an independent set of size at least three in linear time.
  Thus we determine whether the co-component is $3P_1$-free, or $P_3$-free.

  Now, for each co-component $D$, we solve {\sc Weighted Subset Feedback Vertex Set} for $D$.
  Note that we can do this in polynomial by Theorem~\ref{t-ww} if $D$ is $3P_1$-free,
 and we can use the aforementioned result of Bergougnoux et al.~\cite{BPT19} for $P_4$-free graphs if $D$ is $P_3$-free.
 
  Now we consider each pair $\{D_1,D_2\}$ of distinct co-components.  Note there are ${\mathcal O}(n^2)$ pairs to consider.
  For each pair we will compute four sets $F_T$ such that $G[F_T]$ is a $T$-forest.
  \begin{enumerate}
    \item We select a maximum-weight vertex $v_1$ from $V(D_1)$, and compute a maximum-weight independent set $I_2$ of $D_2$, where the weightings are inherited from the weighting~$w$ of $G$. Set $F_T = \{v_1\} \cup I_2$.
      Note that $G[F_T]$ is a tree, so it is certainly a $T$-forest.
      We can compute these independent sets in polynomial time when restricted to $3P_1$- or $P_3$-free graphs (for example, see \cite{LVV14}).

    \item The symmetric counterpart to the previous case: we select a maximum-weight vertex $v_2$ from $V(D_2)$, and compute a maximum-weight independent set $I_1$ of $D_1$, and set $F_T = \{v_2\} \cup I_1$.

    \item We select a maximum-weight vertex $v_1$ from $V(D_1) \setminus T$, and compute a maximum-weight $T$-independent set $I_2$ of $D_2$.  When $D_2$ is $3P_1$-free, we can solve this in polynomial-time by Lemma~\ref{l-wsis}.
      On the other hand, when $D_2$ is $P_3$-free, we solve the complementary problem, in polynomial time, by Lemma~\ref{svc-p4}.
      Set $F_T = I_2 \cup \{v_1\}$.  Note that $G[F_T]$ is a $T$-forest, since every vertex of $T$ has degree~1 in $G[F_T]$ (its only neighbour is $v_1$).

    \item This case is the symmetric counterpart to the previous one: select a maximum-weight vertex $v_2$ of $V(D_2) \setminus T$, compute a maximum-weight $T$-independent set $I_1$ of $D_1$, and set $F_T = I_1 \cup \{v_2\}$.
  \end{enumerate}

\noindent
  Finally, take the maximum-weight $F_T$ among the (at most) $4\binom{\ell}{2} + \ell + 1$ possibilities described, where the final possibility is that $F_T = V(G)\setminus S_T$.

\medskip
\noindent
  To prove correctness of this algorithm, suppose $F_T$ is a maximum-weight set such that $G[F_T]$ is a $T$-forest.
  If $F_T \subseteq V(G) \setminus T$, then certainly the algorithm will either output $V(G) \setminus T$ or another solution with weight equal to $w(F_T)$.
  So we may assume that $F_T \cap T \neq \emptyset$.
  Now, by Claim~1, $F_T$ meets one or two co-components of $G$.
  If it meets exactly one co-component~$D$, then $F_T$ is a maximum-weight set such that $D[F_T]$ is a $T$-forest, which will be found by the algorithm in the first phase.
  If it meets two co-components $D_1$ and $D_2$, then the correctness of the algorithm follows from Claim~3.
  This concludes the proof. \qed
\end{proof}

\section{The Proof of Theorem~\ref{t-realmain2}}

In this section we prove Theorem~\ref{t-realmain2}.

\medskip
\noindent
\faketheorem{Theorem~\ref{t-realmain2} (restated).}
{\it Let $H$ be a graph with $H\notin \{2P_1+P_3,P_1+P_4,2P_1+P_4\}$.  Then {\sc Weighted Subset Feedback Vertex Set} on $H$-free graphs is polynomial-time solvable if $H\ssi 3P_1+P_2$, $P_1+P_3$, or $P_4$, and is \NP-complete otherwise.}

\begin{proof}
We first recall the results that {\sc Feedback Vertex Set}, that is, {\sc Weighted Subset Feedback Vertex Set} with  $T=\emptyset$ and $w\equiv 1$, is \NP-complete on $H$-free graphs if  $H$ has a cycle~\cite{Po74} or a claw~\cite{Mu17b}.
In the remaining case $H$ is a linear forest.
If $H$ contains an induced $2P_2$, then we use a result of~\cite{FHKPV14}, which states that 
{\sc Subset Feedback Vertex Set} is \NP-complete for split graphs, or equivalently, $(C_4,C_5,2P_2)$-free graphs.
If $H$ contains an induced $5P_1$, then we use Theorem~\ref{t-known}.
In the other cases, we use Theorems~\ref{t-ww},~\ref{t-fvsp1p3} and the fact that {\sc Weighted Subset Feedback Vertex Set} is polynomial-time solvable for $P_4$-free graphs~\cite{BPT19}, respectively. \qed
\end{proof}

\section{Conclusions}\label{s-conclusions}

By developing a general framework, we determined the complexity of {\sc Weighted Subset Odd Cycle Transversal} and {\sc Weighted Subset Feedback Vertex Set} on $H$-free graphs except when $H\in \{2P_1+P_3$, $P_1+P_4$, $2P_1+P_4\}$.
In particular, our results demonstrate that the classifications of {\sc Weighed Subset Odd Cycle Transversal} and {\sc Subset Odd Cycle Transversal} do not coincide for $H$-free graphs.

We believe that the case $H=2P_1+P_3$ is polynomial-time solvable for both problems using the methodology of our framework and our algorithms for $H=P_1+P_3$ as a subroutine. We leave this for future research.
The other two cases are open  even for {\sc Odd Cycle Transversal} and {\sc Feedback Vertex Set}. For these cases we first need to be able to determine the complexity of 
finding a maximum induced disjoint union of stars in a $(P_1+P_4)$-free graph.
We refer to Table~\ref{t-table} for other unresolved cases  in our framework and note again that our results demonstrate that the classifications of {\sc Weighed Subset Odd Cycle Transversal} and {\sc Subset Odd Cycle Transversal} do not coincide for $H$-free graphs.

We note finally that that there are other similar transversal problems that have been studied, but their complexity classifications on $H$-free graphs have not been settled:  {\sc (Subset) Even Cycle Transversal}~\cite{KKK12,MRRS12,PPR21}, for example.  Versions of the transversal problems that we have considered that have the additional constraint that the transversal must induce either a connected graph or an independent set have also been studied for $H$-free graphs~\cite{BDFJP19,CHJMP18,DJPPZ18,JPP20}.  An interesting direction for further research is to consider the subset variant of these problems, and, more generally, to understand the relationships amongst the computational complexities of all these problems.


\appendix

\section{The Proof of Lemma~\ref{l-sp2}}\label{a-sp2}

We first need the following two well-known results: the first is due to Balas and Yu and the second  to Tsukiyama, Ide, Ariyoshi, and Shirakawa.

\begin{lemma}[\cite{BY89}]\label{t-by}
For every constant~$s\geq 1$, the number of maximal independent sets of an $sP_2$-free graph on $n$ vertices is at most $n^{2s}+1$.
\end{lemma}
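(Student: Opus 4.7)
The plan is to establish the bound by induction on~$s$, encoding each maximal independent set by a short tuple of vertices. In the base case $s=1$, an $sP_2$-free graph has no edges, so its unique maximal independent set is the whole vertex set, giving $1\le n^2+1$.

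For the inductive step, fix an $sP_2$-free graph~$G$ on~$n$ vertices together with a canonical reference maximal independent set~$I_0$ (for instance, the lexicographically smallest one). For any maximal independent set $I\neq I_0$, I would let $v$ be the lex-smallest element of $I_0\setminus I$; by maximality of~$I$, the vertex~$v$ has a neighbor in~$I$, and we let $u$ be the lex-smallest such neighbor. The key observation is that $G':=G-(N[u]\cup N[v])$ is $(s-1)P_2$-free, since any induced $(s-1)P_2$ in~$G'$ together with the edge~$uv$ --- whose endpoints are anticomplete to $V(G')$ by construction --- would yield an induced $sP_2$ in~$G$. I would then recursively encode a maximal independent set of~$G'$ derived from~$I$ via a canonical completion rule (for example, the lex-smallest maximal independent set of~$G'$ that extends $I\cap V(G')$), and prepend the pair $(u,v)$ to obtain a tuple of at most $2s$ vertices. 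The inductive hypothesis bounds the number of such tails by $n^{2(s-1)}+1$, and a short calculation shows that including the pair $(u,v)$ together with the empty encoding for $I=I_0$ keeps the running total below $n^{2s}+1$.

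The main obstacle is verifying that the encoding is injective: because $I\cap V(G')$ need not itself be maximal in~$G'$, the recursion relies on the canonical completion rule, and the decoder must be able to distinguish which vertices of $V(G')\setminus I$ were added by completion from those that genuinely belonged to~$I$. I would address this by defining the decoder as a deterministic greedy sweep through the vertices of $V(G')$ in lex order, at each step using the already-reconstructed portion of~$I$ together with the lex-minimal choice of~$u$ and~$v$ to decide membership of the current vertex. Proving that this sweep always reconstructs the original~$I$ is the delicate part of the argument, and it is precisely there that the hypothesis of $sP_2$-freeness is used in its strongest form, to rule out configurations that would confuse the decoder.
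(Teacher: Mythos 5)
The paper does not prove this lemma itself---it is quoted from Balas and Yu---so your argument has to stand on its own, and as written it does not. The base case and the observation that $G'=G-(N[u]\cup N[v])$ is $(s-1)P_2$-free are both fine, but the encoding $I\mapsto (u,v,\mathrm{enc}(J))$ is not injective, and the failure is not the one you flag. You worry about distinguishing vertices of $V(G')$ added by the canonical completion from those genuinely in $I$, but the larger loss is that $I$ may contain many vertices of $N(v)\setminus N[u]$ (any independent set of neighbours of $v$ avoiding $N(u)$ can sit inside $I$ alongside $u$), and these vertices are deleted before the recursion and recorded nowhere; your decoder only ever sweeps $V(G')$, so it cannot recover them. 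Concretely, take $V(G)=\{v,u,a,b\}$ with edges $vu$, $va$, $vb$, $ab$ and the ordering $v<u<a<b$. This graph is $2P_2$-free, its maximal independent sets are $\{v\}$, $\{u,a\}$ and $\{u,b\}$, and $I_0=\{v\}$ is the lexicographically smallest. For both $I=\{u,a\}$ and $I=\{u,b\}$ the marked non-element is $v$, the marked neighbour is $u$, and $G'=G-(N[u]\cup N[v])$ is empty, so both sets receive the identical code $(u,v,\emptyset)$. No decoder can separate them from that code, so the counting argument collapses. Deferring the injectivity proof to a ``delicate part'' that is ``precisely where $sP_2$-freeness is used in its strongest form'' is exactly the part of the proof that is missing, and the example shows it cannot be supplied for this encoding.

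There is also a secondary arithmetic problem: even granting injectivity, your recursion gives at most $1+n^2\bigl(n^{2(s-1)}+1\bigr)=n^{2s}+n^2+1$ codes, which exceeds the claimed $n^{2s}+1$. Both issues are repairable, but only by changing the encoding, not just the decoder: the code must carry enough information to pin down $I\cap N(v)$ as well (for instance by recursing on a subgraph chosen so that \emph{all} of $I\setminus\{u\}$ survives into it, or by recording additional vertices/edges that determine $I$ on $N(v)$), and the bookkeeping must be redone so the totals actually telescope to $n^{2s}+1$. As it stands, the proposal identifies the right kind of strategy (an induction on $s$ via deleting the closed neighbourhoods of an edge) but does not constitute a proof.
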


\begin{lemma}[\cite{TIAS77}]\label{t-tias}
For every constant~$s\geq 1$, it is possible to enumerate all maximal independent sets of an $sP_2$-free graph $G$ on $n$ vertices and $m$ edges
with a delay of ${\mathcal O}(nm)$.
\end{lemma}

\noindent
We are now ready to show Lemma~\ref{l-sp2}.

\medskip
\noindent
{\bf Lemma~\ref{l-sp2} (restated).}
{\it For every integer $s\geq 1$, {\sc Weighted Odd Cycle Transversal} is polynomial-time solvable for $sP_2$-free graphs.}

\begin{proof}
Let $S$ be a minimum-weight odd cycle transversal of an $sP_2$-free graph $G=(V,E)$. Let $B_S=V\setminus S$. Then $G[B_S]$ is a bipartite graph. We choose a bipartition $(X,Y)$ of $B_S$ such that $X$ has maximum size (so every vertex in $Y$ has at least one neighbour in $X$). Then $X$ is a maximal independent set of $G$, as otherwise there exists a vertex~$u\in S$ not adjacent to any vertex of $X$, and thus $S\setminus \{u\}$ is an odd cycle transversal of $G$ with smaller weight, contradicting the fact that $S$ has minimum weight. Moreover, by a similar argument, $Y$ is a maximal independent set of~$G-X$. 
We conclude that $S$ is a minimal odd cycle transversal of $G$.

We describe a procedure to find all minimal odd cycle transversals of $G$. As shown above, the minimum-weight odd cycle transversals of $G$ will be amongst them. Hence, this provides an algorithm for {\sc Weighted Odd Cycle Transversal}.  We enumerate all maximal independent sets of~$G$, and for each maximal independent set~$X$, we enumerate all maximal independent sets of~$G-X$.
For each such set $Y$, we note that $V(G)\setminus (X\cup Y)$ is an odd cycle transversal of $G$.  By the arguments above, we will find every minimal odd cycle transversal in this way, and, by Lemmas~\ref{t-by} and~\ref{t-tias}, this takes polynomial time. \qed
\end{proof}

\section{Weighted Odd Cycle Transversal on ${\mathbf{(sP_1+P_3)}}$-free Graphs}\label{a-b}

We need two known results. For the second one, we use a folklore trick to extend the result of~\cite{GKPP19}
 from $P_6$-free graphs to $(sP_1+P_6)$-free graphs (see also~\cite{DFJPPP19}).
 Let $c$ be the function on the non-negative integers given by $c(s):=\max\{3,2s-1\}$.
We will use this function~$c$ throughout the remainder of this section, starting with the following lemma.

\begin{lemma}[\cite{DFJPPP19}]\label{sp1p3-s}
Let $s \geq 0$ be an integer.
Let~$G$ be a bipartite $(sP_1+\nobreak P_3)$-free graph.
If~$G$ has a connected component on at least~$c(s)$ vertices, then there are at most~$s-1$ other connected components of~$G$ and each of them is on at most two vertices.
\end{lemma}

\begin{theorem}[\cite{GKPP19}]\label{sp1p6-vc}
For every $s\geq 0$, {\sc Weighted Vertex Cover} can be solved in polynomial time on $(sP_1+\nobreak P_6)$-free graphs.
\end{theorem}

\noindent
We can now prove the result of this section by mimicking the proof of the unweighted variant 
from~\cite{DFJPPP19}.

\begin{theorem}\label{sp1p3w-octi}
For every $s\geq 0$, {\sc Weighted Odd Cycle Transversal} can be solved in polynomial time on $(sP_1+\nobreak P_3)$-free graphs.
\end{theorem}

\begin{proof}
Let $s \geq 0$ be an integer, and let $G=(V,E)$ be an $(sP_1+\nobreak P_3)$-free graph with a vertex weighting $w$.
We must describe how to find a minimum weight odd cycle transversal of~$G$.
If $s=0$, then we can use Theorem~\ref{tt-p4}.
We now assume that $s\geq 1$ and use induction.
We will in fact describe how to solve the complementary problem and find a maximum weight induced bipartite subgraph of~$G$.
The proof is by induction on~$s$ and our algorithm performs two steps in polynomial time, which together cover all possibilities.

\thmstep{\label{step4:1}Compute a maximum weight induced bipartite subgraph~$B$ such that every connected component of~$B$ has at least~$c(s)$ vertices.}
By Lemma~\ref{sp1p3-s}, we know that~$B$ will be connected. Hence, $B$ has a unique bipartition, which we denote~$\{X,Y\}$.
We first find a maximum weight induced bipartite subgraph~$B$ that is a star: we consider each vertex~$x$ and find a maximum weight induced star centred at~$x$ by finding a maximum weight independent set in~$N(x)$.
This can be done in polynomial time by Theorem~\ref{sp1p6-vc}.

Next, we find a maximum weight induced bipartite subgraph~$B$ that is not a star.
We consider each of the~$O(n^2)$ choices of edges~$xy$ of~$G$ and find a maximum weight induced connected bipartite subgraph~$B$ such that $x \in X$ and $y \in Y$ and neither~$x$ nor~$y$ has degree~$1$ in~$B$ (since~$B$ is not a star, it must contain such a pair of vertices).
Note that the number of vertices in~$X$ non-adjacent to~$y$ is at most $s-1$, otherwise~$B$ induces an $sP_1+\nobreak P_3$.
Similarly there are at most $s-1$ vertices in~$Y$ non-adjacent to~$x$.
We consider each of the~$O(n^{2s-2})$ possible pairs of disjoint sets~$X'$ and~$Y'$, which are each independent sets of size at most $s-1$ such that $X'\cup Y'$ is anti-complete to~$\{x,y\}$.
We will find a maximum weight induced bipartite subgraph with partition classes~$X$ and~$Y$ such that $\{x\} \cup X' \subseteq X$ and $\{y\} \cup Y' \subseteq Y$ and every vertex in $X \setminus X'$ is adjacent to~$y$ and every vertex in $Y \setminus Y'$ is adjacent to~$x$.
That is, we must find a maximum weight independent set in both $N(x) \setminus N(\{y\} \cup Y')$ and $N(y) \setminus N(\{x\} \cup X')$; see \figurename~\ref{f-il} for an illustration.
This can be done in polynomial time, again by applying Theorem~\ref{sp1p6-vc}.

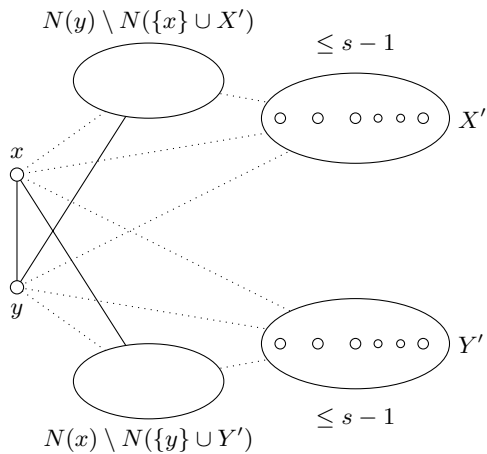
\begin{figure}
\begin{center}
\begin{tikzpicture}[xscale=0.5, yscale=0.5]
\draw (-1,-4) -- (-4.5,1.5) -- (-4.5,-1.5) -- (-1,4);
\draw[dotted] (4.5,3) -- (-4.5,1.5) -- (4.5,-3) (-4.5,1.5) -- (-1,4) -- (4.5,3)
(4.5,3) -- (-4.5,-1.5) -- (4.5,-3) (-4.5,-1.5) -- (-1,-4) -- (4.5,-3);
\draw[fill=white] 
(-4.5,1.5) circle [radius=5pt]
(-4.5,-1.5) circle [radius=5pt]
(2.5,3) circle [radius=4pt]
(3.5,3) circle [radius=4pt]
(4.5,3) circle [radius=4pt]
(5.1,3) circle [radius=3pt]
(5.7,3) circle [radius=3pt]
(6.3,3) circle [radius=4pt]
(2.5,-3) circle [radius=4pt]
(3.5,-3) circle [radius=4pt]
(4.5,-3) circle [radius=4pt]
(5.1,-3) circle [radius=3pt]
(5.7,-3) circle [radius=3pt]
(6.3,-3) circle [radius=4pt]
(-1,4) ellipse (2cm and 1cm)
(-1,-4) ellipse (2cm and 1cm)
(4.5,3) ellipse (2.5cm and 1.2cm)
(4.5,-3) ellipse (2.5cm and 1.2cm);
\node[above] at (4.5,4.5) {$\leq s-1$};
\node[below] at (4.5,-4.5) {$\leq s-1$};
\node[above] at (-1,5) {$N(y)\setminus N(\{x\}\cup X')$};
\node[below] at (-1,-5) {$N(x)\setminus N(\{y\}\cup Y')$};
\node[right] at (7,3) {$X'$};
\node[right] at (7,-3) {$Y'$};
\node[above] at (-4.5,1.7) {$x$};
\node[below] at (-4.5,-1.7) {$y$};
\end{tikzpicture}
\end{center}
\caption{An illustration~\cite{DFJPPP19} of Step~\ref{step4:1} of the algorithm in the proof of Theorem~\ref{sp1p3w-octi}. Full and dotted lines indicate when two sets are complete or anti-complete to each other, respectively. The absence of a full or dotted lines indicates that edges may or may not exist between two sets.}
\label{f-il}
\end{figure}

\thmstep{\label{step4:2}Compute a maximum weight induced bipartite subgraph~$B$ such that~$B$ has a connected component with at most $c(s)-1$ vertices.}
We consider each of the~$O(n^{c(s)-1})$ possible choices of a non-empty set~$L$ of at most $c(s)-1$ vertices and discard those that do not induce a bipartite graph.
We will find a maximum weight induced bipartite subgraph~$B$ that has~$G[L]$ as a connected component.
Let $U=N(L)$, and let $G'=G-(L\cup U)$.
As~$G'$ is $((s-\nobreak 1)P_1+\nobreak P_3)$-free, we can find a maximum weight induced bipartite subgraph~$B'$ of~$G'$ in polynomial time and $B' + G[L]$ is a maximum weight induced bipartite subgraph among those that have~$G[L]$ as a connected component.
\qed
\end{proof}

\section{The Proof of Theorem~\ref{tt-p4}}\label{a-p4}

In this section we give a direct proof of Theorem~\ref{tt-p4}, which states that \textsc{Weighted Subset Odd Cycle Transversal} can be solved in polynomial-time for $P_4$-free graphs.  Our approach mimics the proof of the unweighted variant of  {\sc Subset Odd Cycle Transversal} from~\cite{BJPP20}. 

Let~$G_1$ and~$G_2$ be two vertex-disjoint graphs.
The \emph{union} operation $+$ creates the disjoint union $G_1+\nobreak G_2$ of~$G_1$ and~$G_2$ (recall that $G_1+G_2$ is the graph with vertex set $V(G_1)\cup V(G_2)$ and edge set $E(G_1)\cup E(G_2)$). 
The \emph{join} operation adds an edge between every vertex of~$G_1$ and every vertex of~$G_2$.
The graph~$G$ is a \emph{cograph} if~$G$ can be generated from independent vertices 
 by a sequence of join and union operations.
A graph is a cograph if and only if it is $P_4$-free (see, for example,~\cite{BLS99}).
It is also well known~\cite{CLS81} that a graph $G$ is a cograph if and only if $G$ allows a unique tree decomposition called the {\it cotree} $T_G$ of $G$, which has the following properties:
\begin{itemize}
\item [1.]  The root $r$ of $T_G$ corresponds to the graph $G_r=G$.
\item [2.]   Each leaf $x$ of $T_G$ corresponds to exactly one vertex of $G$, and vice versa. Hence $x$ corresponds to a unique single-vertex graph $G_x$.
\item [3.]  Each internal node $x$ of $T_G$ has at least two children, is  labelled $\oplus$ or $\otimes$, and corresponds to an induced subgraph $G_x$ of $G$ defined as follows:
\begin{itemize}
\item if $x$ is a $\oplus$-node, then $G_x$ is the disjoint union of all graphs $G_y$ where $y$ is a child of~$x$;
\item if $x$ is a $\otimes$-node, then $G_x$ is the join of all graphs $G_y$ where $y$ is a child of~$x$.
\end{itemize}
\item [4.] Labels of internal nodes on the (unique) path from any leaf to $r$ alternate between $\oplus$ and~$\otimes$.
\end{itemize}
Note that $T_G$ has ${\mathcal O}(n)$ vertices. We modify $T_G$ into a {\it modified cotree} $T_G'$ in which each internal node has exactly two children but (4) no longer holds.  The following well-known procedure (see for example~\cite{BM93}) achieves this. If an internal node~$x$ of $T_G$ has more than two children $y_1$ and $y_2$, remove the edges $xy_1$ and $xy_2$ and add a new vertex $x'$ with 
edges $xx'$, $x'y_1$ and $x'y_2$. If~$x$ is a $\oplus$-node, then $x'$ is a $\oplus$-node. If $x$ is a $\otimes$-node, then $x'$ is a 
$\otimes$-node. Applying this rule exhaustively yields $T_G'$. As~$T_G$ has ${\mathcal O}(n)$ vertices, constructing $T_G'$ from $T_G$ takes linear time. This leads to the following result, due to Corneil, Perl and Stewart, who proved it for cotrees.

\begin{lemma}[\cite{CPS85}]\label{l-cotree}
Let $G$ be a graph with $n$ vertices and $m$ edges.  Then deciding whether or not $G$ is a cograph, and constructing a modified cotree $T_G'$ (if it exists) takes time ${\mathcal O}(n+m)$.
\end{lemma}

We require a result for the weighted subset variant of {\sc Vertex Cover}, which we now formally define.
For a graph $G=(V,E)$ and a set $T \subseteq V$, 
a set $S_T\subseteq V$~is a {\it $T$-vertex cover} if $S_T$ has at least contains one vertex incident to every edge that is incident to a vertex of $T$.

\problemdef{{\sc Weighted Subset Vertex Cover}}{a graph $G$, a subset $T\subseteq V(G)$, a non-negative vertex weighting~$w$ and an integer $k\geq 1$.}{does $G$ have a $T$-vertex cover $S_T$ with $w(S_T) \leq k$?} 
The following result is proven in the same way as  the unweighted variant of {\sc Subset Vertex Cover} in \cite{BJPP20}.

\begin{lemma}\label{svc-p4}
{\sc Weighted Subset Vertex Cover} can be solved in polynomial time for $P_4$-free graphs.
\end{lemma}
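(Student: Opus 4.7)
The plan is to mimic the cotree dynamic programming used for the unweighted variant in~\cite{BJPP20}, extending it to weights. First I would reformulate the problem: $S_T$ is a $T$-vertex cover of $G$ if and only if $I := V(G) \setminus S_T$ is a $T$-independent set (every $t \in I \cap T$ is isolated in $G[I]$). Since $w$ is non-negative, minimising $w(S_T)$ is the same as maximising $w(I)$, so it suffices to find a maximum-weight $T$-independent set of $G$.

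Next, by Lemma~\ref{l-cotree} I can compute in linear time a modified cotree $T_G'$ of $G$. For every node $x$ of $T_G'$, write $G_x$ for its associated induced subgraph on vertex set $V_x$, and define
\[
 a(x) := \max\{\, w(I) : I \subseteq V_x \text{ is a } T\text{-independent set of } G_x \,\}.
\]
I would compute $a$ bottom-up, alongside the easily-maintained quantity $w(V_x \setminus T)$. For a leaf $x$ with $V_x = \{v\}$ set $a(x) = w(v)$, since a single vertex is trivially $T$-independent. For a $\oplus$-node $x$ with children $y_1, y_2$ the graph $G_x$ is the disjoint union of $G_{y_1}$ and $G_{y_2}$, so any $T$-independent set of $G_x$ splits independently across the two parts, giving $a(x) = a(y_1) + a(y_2)$.

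The key case is a $\otimes$-node $x$ with children $y_1,y_2$. If a $T$-independent set $I$ of the join $G_x$ contains some $t \in T \cap V_{y_1}$, then, because every vertex of $V_{y_2}$ is adjacent to $t$ in $G_x$, the isolation of $t$ in $G[I]$ forces $I \cap V_{y_2} = \emptyset$; by symmetry the same holds with $y_1, y_2$ swapped. In the remaining case $I \cap T = \emptyset$, and then $I$ can be any subset of $V_x \setminus T$, so since $w \geq 0$ the best choice is $I = V_x \setminus T$. Hence
\[
 a(x) = \max\{\, a(y_1),\; a(y_2),\; w(V_x \setminus T) \,\}.
\]

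The algorithm outputs $w(V(G)) - a(r)$ for $r$ the root of $T_G'$ as the minimum weight of a $T$-vertex cover, and a corresponding set is recovered by standard traceback. As $T_G'$ has $O(n)$ nodes and each recurrence is $O(1)$ once $w(V_x \setminus T)$ is known, the total running time is polynomial. No substantial obstacle remains: the only nontrivial point is the join recurrence, and it is immediate from the fact that in a join any $T$-vertex on one side is adjacent to every vertex on the other side, so it cannot coexist with any vertex from the opposite side in a $T$-independent set.
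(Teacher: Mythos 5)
Your proof is correct and follows essentially the same cotree dynamic programming as the paper; you simply phrase it in the complementary language of maximum-weight $T$-independent sets, and your three options at a $\otimes$-node ($a(y_1)$, $a(y_2)$, $w(V_x\setminus T)$) correspond exactly to the paper's three candidate covers $S_z\cup V(G_y)$, $S_y\cup V(G_z)$, and $T\cap V(G_x)$.
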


\begin{proof}
Let $G$ be a cograph with $n$ vertices and $m$ edges. First construct a modified cotree~$T_G'$ and then consider each node of $T_G'$  starting at the leaves of $T_G'$ and ending at the root $r$. Let $x$ be a node of $T_G'$. We let $S_x$ denote a minimum 
weight $(T\cap V(G_x))$-vertex cover of $G_x$. 

If $x$ is a leaf, then 
$G_x$ is a $1$-vertex graph. Hence, we can let $S_x=\emptyset$.
Now suppose that~$x$ is a $\oplus$-node.
Let $y$ and $z$ be the two children of $x$. Then, as $G_x$ is the disjoint union of $G_y$ and $G_z$, we can let $S_x=S_y\cup S_z$.
Finally suppose that  $x$ is a $\otimes$-node.
Let $y$ and $z$ be the two children of $x$. As $G_x$ is the join of $G_y$ and $G_z$ we observe the following:
if $V(G_x)\setminus S_x$ contains a vertex of $T\cap V(G_y)$, then $V(G_z)\subseteq S_x$. 
Similarly, if $V(G_x)\setminus S_x$ contains a vertex of $T\cap V(G_z)$, then $V(G_y)\subseteq S_x$. 
Hence, we  let $S_x$ be a set with minimum weight from $S_y\cup V(G_z)$, $S_z\cup V(G_y)$ and 
$T\cap V(G_x)$.

Constructing $T_G'$ takes ${\mathcal O}(n+m)$ time by Lemma~\ref{l-cotree}.
As $T_G'$ has~${\mathcal O}(n)$ nodes and processing a node takes ${\mathcal O}(1)$ time, the total running time is ${\mathcal O}(n+m)$. \qed 
\end{proof}

\noindent
We are now ready to prove the following result.

\medskip
\noindent
\faketheorem{Theorem~\ref{tt-p4} (restated).}
{\it {\sc Weighted Subset Odd Cycle Transversal} is polynomial-time solvable for $P_4$-free graphs.}

\begin{proof}
Let $G$ be a cograph with $n$ vertices and $m$ edges with an edge weighting~$w$. Let $T\subseteq V(G)$. First construct the modified cotree~$T_G'$ and then consider each node of $T_G'$ starting at the leaves of $T_G'$ and ending in its root $r$. Let~$x$ be a node of $T_G'$. We let $S_x$ denote a minimum weight odd $(T\cap V(G_x))$-cycle transversal of $G_x$. 

If $x$ is a leaf, then $G_x$ is a 1-vertex graph. Hence, we can let $S_x=\emptyset$.
Now suppose that~$x$ is a $\oplus$-node.
Let $y$ and $z$ be the two children of $x$. Then, as $G_x$ is the disjoint union of $G_y$ and $G_z$, we let $S_x=S_y\cup S_z$.

Finally suppose that $x$ is a $\otimes$-node.
Let $y$ and $z$ be the two children of $x$. Let $T_y=T\cap V(G_y)$ and $T_z=T\cap V(G_z)$.
Let $B_x=V(G_x)\setminus S_x$. 
As $G_x$ is the join of $G_y$ and $G_z$ we observe the following.
If $B_x\cap V(G_y)$ contains two adjacent vertices, at least one of which belongs to~$T_x$, then $B_x\cap V(G_z)=\emptyset$ (as otherwise $G[B_x]$ has a triangle containing a vertex of $T$) and thus 
$V(G_z)\subseteq S_x$. 
In this case we may assume that $S_x=S_y\cup V(G_z)$.
Similarly, if $B_x\cap V(G_z)$ contains two adjacent vertices, at least one of which belongs to $T_z$, then $B_x\cap V(G_y)=\emptyset$ and thus $V(G_y)\subseteq S_x$. 
In this case we may assume that $S_x=S_z\cup V(G_y)$.
From the two sets $S_y\cup V(G_z)$ and $S_z\cup V(G_y)$ we remember the one that has smallest weight.
 
It remains to examine the case where $B_x\cap V(G_y)$ and $B_x\cap V(G_z)$ induce subgraphs of~$G$ in which the vertices of $T_y\cap B_x$ and $T_z\cap B_x$, respectively, are singleton components.

First suppose that  $T_y\cap B_x$ and $T_z\cap B_x$ are both non-empty.
Then $B_x\cap V(G_y)$ and $B_x\cap V(G_z)$ are both independent sets, as otherwise $G[B_x]$ would contain a $T$-triangle.
We examine this situation by computing a maximum weight independent set $I_y$ in $G_y$ and a maximum weight independent set $I_z$ in $G_z$; it is well-known that this can be done in polynomial time (for example, it follows from Lemma~\ref{svc-p4}).
We remember $V(G_x)\setminus (I_y\cup I_z)$.

Now suppose that  $T_y\cap B_x$ is non-empty, but $T_z\cap B_x$ is empty.
Then $B_x\cap V(G_z)$ must be an independent set, as otherwise we obtain a $T$-triangle by taking a vertex of $T_y\cap B_x$ and two adjacent vertices of $B_x\cap V(G_z)$.
First assume that $B_x\cap V(G_z)$ has size at least~$2$.
We observe that $(B_x\cap V(G_y))\setminus T_y$ is also an independent set; otherwise two adjacent vertices of $(B_x\cap V(G_y))\setminus T_y$, two vertices of $B_x\cap V(G_z)$ and one vertex of $T_y\cap B_x$ would form a $T$-cycle on five vertices. Hence, both $B_x\cap V(G_y)$ and $B_z\cap V(G_z)$ are independent sets, and we already dealt with this case above.

Now assume that $B_x\cap V(G_z)$ has size at most~$1$. 
In this case $B_x\cap V(G_y)$ is a minimum weight $T_y$-vertex cover of $G_y$. 
We can compute a minimum weight $T_y$-vertex cover $S$ of $G_y$ in polynomial time by Lemma~\ref{svc-p4}. 
We remember 
$S\cup (V(G_z)\setminus \{z\})$
where $z$ is a vertex with maximum weight in $V(G_z)\setminus T_z$ if the latter set is non-empty; otherwise we just remember
$S\cup (V(G_z)$.

We deal with the case where $T_z\cap B_x$ is non-empty, but $T_y\cap B_x$ is empty in the same way and remember the output. 
We also consider the possible situation where $T_z\cap B_x=T_y\cap B_x=\emptyset$, in which case 
we remember $T$.
Finally, we take as set $S_x$ a set of minimum size over the sets that we remembered.

Constructing $T_G'$ takes ${\mathcal O}(n+m)$ time by Lemma~\ref{l-cotree}.
As~$T_{G'}$ has~$\mathcal{O}(n)$ nodes and processing a node takes $O(n+m)$ time (due to the application of Lemma~\ref{svc-p4}), the total running time is $\mathcal{O}(n^2+mn)$.
\qed
\end{proof}

\section{The Proof of Theorem~\ref{tt-5p1}}\label{a-5p1}

The result is an analogue of Papadopoulos and Tzimas's hardness result for {\sc Weighted Subset Feedback Vertex Set} on $5P_1$-free graphs \cite[Theorem~2]{PT20}.  As mentioned, the proof is essentially identical, as all the relevant $T$-cycles in the constructed {\sc Weighted Subset Feedback Vertex Set} instance are odd.  We provide the full proof for completeness.

A {\it vertex cover} of a graph $G=(V,E)$ is a set $S\subseteq V$ such that $G-S$ is an independent set.
The corresponding decision problem is defined as follows:

\problemdef{{\sc Vertex Cover}}{a graph $G$ and an integer $k\geq 1$.}{does $G$ have a vertex cover $U$ with $|U|\leq k$?} 

\medskip
\noindent
\faketheorem{Theorem~\ref{tt-5p1} (restated).}
{\it {\sc Weighted Subset Odd Cycle Transversal} is \NP-complete for $5P_1$-free graphs.
}

\begin{proof}
We reduce from the {\sc Vertex Cover} problem on $3$-partite graphs, which is \NP-complete~\cite{GJ79}.
Let $(G, k)$ be a {\sc Vertex Cover} instance where $G$ is a $3$-partite graph such that $(X_1,X_2,X_3)$ is a partition of $V(G)$ into independent sets.
We construct an instance $(G', T, w, k)$ of {\sc Weighted Subset Odd Cycle Transversal} as follows.
First, let $G'$ be the graph obtained from $G$ by making each of $X_1$, $X_2$, and $X_3$ into cliques, then introducing $4$ new vertices $r_1$, $r_2$, $r_3$, and $t$ where the neighbourhood of $r_i$ is $X_i \cup \{t\}$ for each $i \in \{1,2,3\}$, and the neighbourhood of $t$ is $\{r_1,r_2,r_3\}$.
We define the weighting~$w$ on $V(G')$ as follows: let $w(v) = 1$ for each $v \in V(G)$, let $w(r_i) = |V(G)|$ for each $i \in \{1,2,3\}$, and $w(t) = |V(G)|$.
Finally, set $T = \{t\}$.
Observe that $X_i \cup \{r_i\}$ is a clique for each $i \in \{1,2,3\}$.
Thus, $G'$ is $5P_1$-free: an independent set of $G'$ contains at most one vertex from $X_i \cup \{r_i\}$ for each $i \in \{1,2,3\}$, and the only other vertex not in one of these three sets is $t$.
See Figure~\ref{f-5p1} for an illustration.

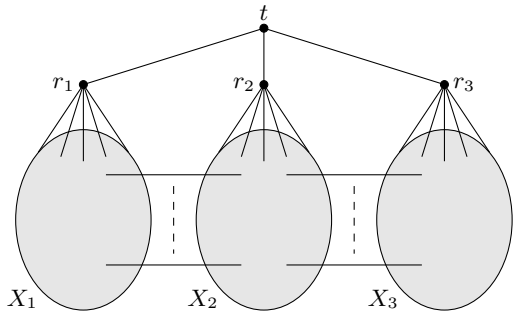
\begin{figure}
\begin{center}
\begin{tikzpicture}[scale=0.3]
\draw[color=black, fill=gray!20] (0,-2) ellipse (3cm and 4cm);
\draw[color=black, fill=gray!20] (8,-2) ellipse (3cm and 4cm);
\draw[color=black, fill=gray!20] (16,-2) ellipse (3cm and 4cm);
\filldraw [black] (0,4) circle [radius=5pt] (8,4) circle [radius=5pt] (16,4) circle [radius=5pt] (8,6.5)  circle [radius=5pt];
\draw (0,4)--(8,6.5)--(8,4);
\draw(8,6.5)--(16,4);
\draw (0,4)--(-2,1);
\draw (0,4)--(-1,0.8);
\draw (0,4)--(0,0.6);
\draw (0,4)--(1,0.8);
\draw (0,4)--(2,1);
\draw (8,4)--(6,1);
\draw (8,4)--(7,0.8);
\draw (8,4)--(8,0.6);
\draw (8,4)--(9,0.8);
\draw (8,4)--(10,1);
\draw (16,4)--(14,1);
\draw (16,4)--(15,0.8);
\draw (16,4)--(16,0.6);
\draw (16,4)--(17,0.8);
\draw (16,4)--(18,1);

\draw (1,0)--(7,0);
\draw[dashed] (4,-0.5)--(4,-3.5);
\draw (1,-4)--(7,-4);
\draw (9,0)--(15,0);
\draw[dashed] (12,-0.5)--(12,-3.5);
\draw (9,-4)--(15,-4);

\node[left] at (0,4) {$r_1$};
\node[left] at (8,4) {$r_2$};
\node[right] at (16,4) {$r_3$};
\node[above] at (8,6.5) {$t$};
\node at (-2.7,-5.5) {$X_1$};
\node at (5.3,-5.5) {$X_2$};
\node at (13.3,-5.5) {$X_3$};
\end{tikzpicture}
\end{center}
\caption{Let $G$ be a 3-partite graph with vertex partition $(X_1,X_2,X_3)$; the illustration is schematic --- any edges might be present between each pair of sets in the partition including $X_1$ and $X_3$ (not indicated for simplicity).  A graph $G'$ is constructed by adding edges so that each of $X_1$, $X_2$ and $X_3$ induces a clique and also adding four further vertices as illustrated; for $1 \leq i \leq 3$, $r_i$ is adjacent to every vertex in $X_i$.  In the proof of the theorem, we use that a subset of $V(G)$ is a vertex cover of $G$ if and only if it is an odd $T$-cycle transversal of $G'$ for $T=\{t\}$.}
\label{f-5p1}
\end{figure}

Suppose that $G$ has a vertex cover $U$ of size at most $k$.
We claim that $U$ is an odd $T$-cycle transversal of $G'$ with $w(U) \le k$.
Clearly $w(U) \le k$, since $w(u) = 1$ for each $u \in U \subseteq V(G)$.
It remains to show that $U$ is an odd $T$-cycle transversal.
Towards a contradiction, suppose that $G'-U$ contains an odd $T$-cycle~$C$.  Since $T=\{t\}$, the cycle $C$ contains an edge incident to $t$.  Without loss of generality, we may assume that $C$ contains the edge $tr_1$.
Observe that $V(G) \setminus U$ is an independent set of $G$, so each edge of $G' - U$ is either incident to (at least) one of $r_1$, $r_2$, $r_3$ and $t$, or both endpoints are in $X_i$ for some $i \in \{1,2,3\}$.
Now every path in $G'-U$ from $v_1$ to $v_2$ for $v_1 \in X_1 \cup \{r_1\}$ and $v_2 \in V(G) \setminus (X_1 \cup \{r_1\})$ passes through the vertex $t$.
So $tr_1$ is a bridge, implying it is not contained in a cycle, a contradiction.
We deduce that $U$ is an odd $T$-cycle transversal.

Now suppose that $G'$ has an odd $T$-cycle transversal of weight at most $k$.
Let	 $U$ be a minimum-weight odd $T$-cycle transversal of $G'$.  In particular, $w(U) \le k$.
Observe that for any $v \in X_1 \cup X_2 \cup X_3$, the set $S_0=(X_1 \cup X_2 \cup X_3) \setminus \{v\}$ is an odd $T$-cycle transversal, since $G'[\{r_1,r_2,r_3,t,v\}]$ is a tree. As $S_0$ is an odd $T$-cycle transversal and $U$ is a minimum-weight odd $T$-cycle transversal, $w(U)\leq w(S_0)=|V(G)|-1$. Hence, $U$ does not contain $r_1$, $r_2$, $r_3$, or $t$.
That is, $U \subseteq V(G)$ and thus $w(U)=|U|$. As $w(U)\le k$, this implies that $|U| \le k$.

We claim that $U$ is a vertex cover of $G$.
Suppose not.
Then, without loss of generality, there is an edge $x_1x_2 \in E(G)$ such that $x_1 \in X_1$, $x_2 \in X_2$, and $x_1,x_2 \notin U$.  But then $tr_1x_1x_2r_2t$ is an odd $T$-cycle of $G'-U$, a contradiction.
We deduce that $U$ is a vertex cover of $G$ of size at most $k$, which completes the proof. \qed
\end{proof}

\end{document}